\definecolor{lightblue}{rgb}{0.3,0.7,0.95}
\newtheorem{theorem}{Theorem}
\newtheorem{corollary}[theorem]{Corollary}
\newtheorem{lemma}[theorem]{Lemma}
\newtheorem{proposition}[theorem]{Proposition}
\newtheorem{observation}[theorem]{Observation}
\newtheorem*{observation*}{Observation}
\theoremstyle{definition}
\newtheorem{definition}{Definition}
\newcommand{\e}{\ensuremath\mathrm{e}}
\renewcommand{\i}{\ensuremath\mathrm{i}}
\DeclareMathOperator{\vecmap}{vec}
\DeclareMathOperator{\Id}{Id}
\DeclareMathOperator*{\argmin}{arg\,min}
\DeclareMathOperator{\supp}{supp}
\newcommand{\CC}{\mathbb{C}}
\newcommand{\RR}{\mathbb{R}}
\newcommand{\KK}{\mathbb{K}}
\newcommand{\ZZ}{\mathbb{Z}}
\newcommand{\NN}{\mathbb{N}}
\renewcommand{\vec}[1]{\mathbf{#1}}
\newcommand{\norm}[1]{\left\Vert #1 \right\Vert} 
\newcommand{\normb}[1]{\bigl\Vert #1 \bigr\Vert} 
\newcommand{\normB}[1]{\Bigl\Vert #1 \Bigr\Vert} 
\newcommand{\snorm}[1]{\norm{#1}} 
\newcommand{\snormb}[1]{\normb{#1}}
\newcommand{\braket}[2]{\left\langle #1, #2 \right\rangle}
\renewcommand{\A}{\vec A} 
\newcommand{\B}{\vec B} 
\newcommand{\x}{\vec x} 
\newcommand{\y}{\vec y} 
\newcommand{\z}{\vec z} 
\newcommand{\ev}{\vec e} 
\newcommand{\proj}{\ensuremath\kern -0.12em\rfloor\kern -0.06em} 
\newlang{\bhtp}{HiHTP} 
\newlang{\HiHTP}{HiHTP} 
\newlang{\htp}{HTP} 
\newlang{\HTP}{HTP} 
\newlang{\GOMP}{GOMP}
\newlang{\HiLasso}{HiLasso}
\newcommand{\tree}[1]{\ensuremath \vec #1}
\DeclareMathOperator{\children}{children}
\DeclareMathOperator{\parent}{parent}
\DeclareMathOperator{\leaves}{leaves}
\newcommand{\fu}{Dahlem Center for Complex Quantum Systems, Freie Universit\"{a}t Berlin, Germany}
\newcommand{\ud}{Institute for Theoretical Physics, Heinrich Heine University D\"{u}sseldorf, Germany}
\newcommand{\ug}{Institute of Theoretical Physics and Astrophysics, University of Gda\'{n}sk, Poland}
\newcommand{\fuMCS}{Department of Mathematics and Computer Science, Freie Universit\"{a}t Berlin, Germany}
\newcommand{\mv}{Institute for Mathematical Sciences, University of Gothenburg and Chalmers University of Technology, Sweden}
\title{Reliable recovery of hierarchically sparse signals for Gaussian and Kronecker product measurements}
\author{Ingo Roth, Martin Kliesch, Axel Flinth, Gerhard Wunder, and Jens Eisert
	\thanks{Parts of this work were presented at ``The first colloquium on the priority programme compressed sensing in information processing (CoSIP)'' in Aachen, Germany, the ``International Traveling Workshop On Interactions Between Sparse Models and Technology, iTWIST'' \cite{RothEtAl:iTwist:2016} in Aalborg, Denmark, in 2016 and the 56th Annual Allerton Conference on Communication, Control, and Computing \cite{RothEtAl:2018} in Monticello, Illinois, USA in 2018. 
	I.\ Roth and J.\ Eisert	are with the \fu; 
		M.\ Kliesch is with the \ud{} and the \ug; A.\ Flinth is with the \mv; 
		G.\ Wunder and J.\ Eisert are with the \fuMCS. (i.roth@fu-berlin.de)
		}
}
\begin{document}

\maketitle


\begin{abstract} 
	We propose and analyze a solution to the problem of recovering a block sparse signal with sparse blocks from linear measurements. Such problems naturally emerge inter alia
	in the context of 
	mobile communication, 
	in order to meet the scalability and low complexity requirements of 
	massive antenna systems and massive machine-type communication.
	We introduce a new variant of the Hard Thresholding Pursuit (\htp) algorithm 
	referred to as \HiHTP. We provide both a proof of convergence and 
	a recovery guarantee for noisy Gaussian measurements that 
	exhibit an improved asymptotic scaling in terms of the sampling complexity in comparison with the usual {\htp} algorithm. 
	Furthermore, hierarchically sparse signals and Kronecker product structured measurements naturally arise together in a variety of applications. We establish the efficient reconstruction of hierarchically sparse signals from Kronecker product measurements using the \HiHTP\ algorithm. 
	Additionally, we provide analytical results that connect our recovery conditions to generalized coherence measures.
	Again, our recovery results exhibit substantial improvement in the asymptotic sampling complexity scaling over the standard setting. Finally, we validate in numerical experiments that for hierarchically sparse signals, \HiHTP\ performs  significantly better compared to \htp . 
\end{abstract}


\section{Introduction}
An important task in the recovery of signals is to approximately reconstruct a vector $\vec x \in \CC^{d}$ from $m$ noisy linear measurements
 \begin{equation}\label{eq:genlinprob}
   \y = \A\x + \ev 
   \ \in \CC^m ,
 \end{equation}
 where $\A\in \CC^{m \times d}$ is the measurement map and $\ev\in \CC^m$ denotes additive noise. 
Such problems arise in many applications, e.g., in image processing, acoustics, radar, machine learning, quantum state tomography, and mobile communication.
Particularly interesting is the case where the number of measurements $m$ is much smaller than the dimension of the signal space and where $\x$ has some known 
structure. 
The machinery of compressed sensing (CS) provides powerful tools for an efficient, stable, and unique reconstruction of $\x$ from $\y$ and $\A$. 
For many instances, this idea works extraordinarily well and is most prominently investigated for the case where the vector $\x$ is $s$-sparse, i.e., $\x$ has at most $s$ non-zero entries,
 see Ref.\ \cite{FouRau13} and references therein.

\subsection{Problem formulation}
In this work, we investigate a natural setting with more structure than mere $s$-sparsity
in which even superior performance of reconstruction is to be expected.
\begin{definition}
A vector $\x\in \CC^{Nn}$ (and the corresponding supports) is \emph{$(s,\sigma)$-hierarchically sparse} or simply $(s,\sigma)$-sparse
if $\x$ consists of $N$ blocks each of size $n$, where at most $s$ blocks have non-vanishing entries and additionally each non-zero block itself is $\sigma$-sparse.
\end{definition}
This structure can be regarded as a simple instance of a larger class of hierarchical sparsity structures which have certain sparsities for nested levels of groupings  and has been introduced in Refs.\ \cite{SprechmannEtAl:2010,FriedmanEtAl:2010, SprechmannEtAl:2011, SimonEtAl:2013}. We shall provide a precise generalization of our definition in Section~\ref{sec:generalHierachicalSparsity}. 
Hierarchically sparse signals arise in a variety of different applications, see Refs.\ \cite{DigheAsaeiBourlard:2016,DaoEtAl:2016, LiuEtAl:2016} for recent examples. 
It can further be seen as a combination of block sparsity \cite{EldarMishali:2009a,EldarMishali:2009b} and level sparsity \cite{AdcockEtal:2013,LiAdcock:2016}.
This study is motivated by the desire to identify model-based sparsity structures \cite{BarCevDua10} that (i) allow for efficient, stable and reliable recovery algorithms with analytical performance guarantees and (ii) are re\-le\-vant in concrete technological applications.

\subsection{Contributions of this work}\label{sec:contribution}
We propose new low-complexity thresholding algorithms for the recovery of hierarchically sparse vectors from incomplete noisy linear measurements and provide a stable and robust recovery guarantee based on generalized RIP conditions, 
so-called HiRIP. 
For Gaussian measurements we provide RIP bounds for a large class of hierarchical sparsity patterns including the simple case of $(s,\sigma)$-sparsity. Therefore, the proposed algorithm has an improved sampling complexity compared to standard CS approaches. The algorithm is compared to standard CS algorithms and convex optimization techniques for hierarchically sparse structures. For Gaussian and subsampled Fourier measurements an improved sampling complexity is found numerically for the proposed algorithm. Furthermore, we extend our analysis to structured Kronecker product measurements which naturally arise in the context of hierarchically sparse signals identifying each matrix in the Kronecker product with a level in the hierarchy. Our analysis implies that sensing matrices resulting from Kronecker products exhibit HiRIP, provided each of the constituent matrices satisfy the (standard) RIP. Putting it firmly, the HiRIP is attainable, while RIP may actually not,
relaxing substantially the RIP requirements of each constituent matrix. We also relate the HiRIP-framework to generalized coherence conditions from the literature, both for general as well as for Kronecker measurements.

\subsection{Selected concrete applications} \label{subsec:application}
Several large-scale signal processing problems in communications do actually exhibit 
hierarchically sparse signals and \HiHTP's simplicity, scalability, and performance guarantees makes its application attractive in a variety of settings. 
Two very recent prominent examples are \emph{massive MIMO (mMIMO)} 
and \emph{massive machine-type communications (mMTC)}. In Refs.~\cite{Wunder2019_TWC, WunderEtAl:WSA:2018, Bazzi2019_ACCESS} we have  considered the
application of \HiHTP\ to mMIMO channel estimation with a uniform linear array antenna configuration.
Such estimation problems can be posed as estimating a sparse matrix $\mathbf{X}\in \CC^{n\times N}$ from subsampling the rows and columns of
\begin{equation}
\mathbf{H}=\mathcal{F}^{(n)}\mathbf{X}(\mathcal{F}^{(N)})^\ast,
\end{equation}
where $\mathcal{F}^{(N)}\in \CC^{N\times N}$ and $\mathcal{F}^{(n)}\in \CC^{n\times n}$ are the matrix representations of the discrete Fourier transform of size $N$ and $n$
representing frequency and angular domain of a wide-band massive MIMO channel, respectively.
We denote by $\bar{\vec A}$ and $\vec A^\ast$ the complex conjugate and Hermitian transpose of a matrix $\vec A$, respectively. 
The $s$ non-vanishing entries within the  matrix $\mathbf{X}$ can then be interpreted as the channel coefficients of resolvable paths of
which the corresponding row indices encode their discrete excitation times $\tau$ and the column indices 
 encode their discrete angles $\theta$. In a physical environment one can typically assume that if the resolution, i.e.\ $N$ and $n$,
is large enough there is only one path per angle, i.e. there is
only one non-zero coefficient per column.
Introducing subsampling matrices $\Phi^{(\tau)},\Phi^{(\theta)}$ in appropriate dimensions and expressing the equation in column-vectorized form gives
\begin{equation}
\mathrm{vec}(\mathbf{H})=(\Phi^{(\theta)}\bar{\mathcal{F}}^{(N)})\otimes (\Phi^{(\tau)}\mathcal{F}^{(n)}) \mathrm{vec}(\mathbf{X}).
\end{equation}
Here, $\mathrm{vec}(\mathbf{H})$ is
 $(s,1)$ hierarchically sparse which is precisely the structure that we can handle with the results
in this paper. Generalizing the described mMIMO setting to multiple users with sporadic activity gives rise to another sparse hierarchy level. We refer to Ref.~\cite{Wunder2019_TWC
} for details.

Another major application is the support of mMTC sporadic traffic \cite{Bockelmann2018_ACCESS,WunderEtAl:mMTC:asilomar:2017,WunderEtAl:MTC:2018} where a massive number of devices
access in an uncoordinated manner the same communication resource.
Hierarchically sparse structures naturally emerge when sparse device activity and the sparse channel impulse responses (CIRs) are jointly exploited.
Recent protocols \cite{Wunder2015_ASILOMAR} propose a common overloaded control channel to simultaneously detect the activity and CIRs. To this end the $i$-th user sends a pilot signal $\vec p^{(i)} \in \CC^w$ into the control channel of bandwidth $w \in \NN$. The CIRs with respect to a single user $\x^{(i)} \in \CC^n$ can be stacked into a vector $\x \in \CC^{Nn}$ allowing for a maximum number of $N$ users. The received signal in an OFDM-type setting  has the measurement matrix \cite{Wunder2015_ASILOMAR,WunderEtAl:mMTC:asilomar:2017}
\begin{equation}\label{eq:CCRAMeasurement}
	\A = \Phi\mathcal{F}^{(w)} D(p).
\end{equation}
The matrix $D(p) \in \CC^{w \times Nn}$ depends on the pilot signals $p = [\vec p^{(1)}, \vec p^{(2)}, \ldots, \vec p^{(N)}]$.
The matrix $\Phi $ is again a mask selecting a support drawn uniformly from $[Nn]$.
In a network with a large number of users $N$ having only sporadic traffic, $\x$ will be block-sparse with a number $s$ of maximally simultaneously active users much smaller than $N$. Furthermore, one observes that the CIR for each user itself is $\sigma$-sparse, see Ref.\ \cite{BajwaEtAl:2010} and references therein. In this context, it is hence natural to assume that 
$\x$ is an $(s,\sigma)$-sparse signal as introduced above. 

Hierarchically structured problems also arise in bilinear compressed sensing problems that involve two sparse vectors \cite{WunderEtAl:2018}. 
In such settings the resulting vectorization of the lifted matrix is hierarchically sparse. 
The hierarchically sparse framework does not exploit the low-rank structure present in such bilinear problems and, thus, gives sub-optimal recovery guarantees in the number of required samples. Nonetheless, one can argue that the \HiHTP\ algorithm exploits as much structure as possible with an efficient, exact projection and is a viable low-complexity algorithm in this context \cite{WunderEtAl:2018,FoucartEtAl:2019}. 
Note that alternative approaches focus on approximate projections for the reconstruction of group sparse signals \cite{BahEtAl:2019}.

\subsection{Related work}
Sparse user activity in MTC as well as sparse channel estimation has meanwhile been studied in many works, see 
Ref.~\cite{Wunder2015_ACCESS}. Both sparsity features can be separately exploited with standard compressed sensing algorithms, e.g., $\ell_1$-norm optimization \cite{BajwaEtAl:2010, ZhuGiannakis:2011}.
The combined sparsity structure arising from sparse CIRs in a scheme with sporadic traffic
has been investigated recently by Ref.~\cite{SchepkerEtAl:2013} (and follow up work) where
a hierarchical version of the Orthogonal Matching Pursuit (OMP) algorithm was formulated
in the same context but without providing a proof of convergence.

This work follows the outline of model based compressed sensing \cite{BarCevDua10} and makes use of generalized RIP constants for unions of finite-dimensional linear subspaces as proposed in Ref.~\cite{BluDav09}. 
The structure of hierarchically sparse signals can be seen as a combination of  level sparsity \cite{AdcockEtal:2013,LiAdcock:2016} and block sparsity \cite{EldarMishali:2009a,EldarMishali:2009b}.  In the latter body of work it has been pointed out that block-sparse signals can be recovered by minimizing the mixed norm $\| \cdot \|_{\ell_2/\ell_1}$
which amounts to the sum of the $\ell_2$-norms of the blocks.
Also, corresponding block thresholding algorithm have been proposed, such as Group Orthorgonal Matching Pursuit \cite{MajumdarWard:2009}.  

The notion of hierarchical sparsity and the special case of block sparse vectors with sparse blocks has been introduced in Refs.\ \cite{SprechmannEtAl:2010,FriedmanEtAl:2010, SprechmannEtAl:2011, SimonEtAl:2013}. Herein a linear combination of the $\ell_1$-norm and the mixed $\ell_2/\ell_1$-norm is employed as a regularizer for $(s,\sigma)$-sparse signals. The resulting convex optimization problem reads as
\begin{equation}\label{eq:HiLassoProblem}
	\operatorname{minimize} \frac{1}{2} \norm{\y-\A\x}^2 + \mu \|\x\|_{\ell_1} + \lambda \| \x \|_{\ell_2/\ell_1}.
\end{equation}
The iterative soft thresholding algorithm solving this optimization problem was dubbed \HiLasso. For \HiLasso\ convergence and recovery guarantees have been shown based on generalized notions of coherence for the block-sparse structure. To the authors' knowledge there are no results available on the required sampling complexity for specific ensembles of measurement matrices such as Gaussian measurements to fulfill the coherence conditions.
A generalization of the Orthogonal Matching Pursuit algorithm to $(s,\sigma)$-sparse vectors has been
proposed in Ref.\ \cite{LiuSun:2011} providing only numerical indications of its performance. 
In Refs.\ \cite{BachEtAl:2012,JenattonEtAl:2011a,JenattonEtAl:2011b} generalizations of the regularizer of \eqref{eq:HiLassoProblem} have been constructed to reconstruct signals with more general sparsity structures that involve groupings of vector entries. 

The relation between the RIP-constants of a group of matrices  and the corresponding constant for their Kronecker product has been investigated in Refs.~\cite{JokarMehrmann:2009,DuarteBaraniuk:2012}. It was established that to build a matrix $\A_1 \otimes \dots \otimes \A_L$ with the $s$-RIP, we need each $\A_\ell$ to 
exhibit the $s$-RIP. The situation can be remedied if one assumes the signal to also feature a Kronecker product structure \cite{DuaBar12}. This structure is much more restrictive than the hierarchical sparsity in this work. There are analytical indications that such combined low-rank and sparsity structures can not be completely exploited in algorithmic solutions following standard compressed sensing approaches \cite{OymakFazelEldarHassibi:2015,Magdon-Ismail:2017}, see also Refs.~\cite{MuEtAl:2013,KlieschEtAl:2019}. 

\subsection{Notation}
\label{sec:setting}
The set of strictly positive integers being not larger than $n \in \ZZ^+$ is denoted by $[n]\coloneqq \{1,2,\dots, n\}$. 
By $\KK$ we denote a field that is either the real numbers $\RR$ or the complex numbers $\CC$. 
The imaginary unit is denoted by $\i$ so that $\i^2 = -1$. 
The $\ell_q$-norm of $\x \in \KK^d$ is denoted by 
$\norm{\x}_{\ell_{q}}\coloneqq(\sum_{j}|x_{j}|^{q})^{1/q}$ and the Euclidean norm by
$\norm{\x}\coloneqq\norm{\x}_{\ell_{2}}$. 
With 
$\supp(\x)\coloneqq\{j\,:\,x_{j}\neq0\}$ we denote the support of a vector 
$\x \in \KK^d$. 
Given a matrix $\A \in \KK^{m\times d}$ we refer by $\A_\Omega$ with $\Omega \subset [d]$ to the $m \times |\Omega|$ \emph{submatrix} consisting only of the columns indicated by $\Omega$. Analogously, for a vector $\x \in \KK^d$ the \emph{restriction} to $\Omega$ is denoted $\x_\Omega \in \KK^{|\Omega|}$. 
We write $\x\proj_\Omega$ for the \emph{projection} of $\x$ to the subspace of $\KK^{d}$ with support $\Omega$, i.e.
\begin{equation}\label{eq:def_proj}
  (\x\proj_\Omega)_k \coloneqq 
  \begin{cases}
    x_k & \text{ for } k \in \Omega,\\
    0 & \text{ for }k \notin \Omega.
  \end{cases}
\end{equation}
The complement of $\Omega \subset [d]$ is denoted by $\bar{\Omega}\coloneqq [d] \setminus \Omega$.

\section{The algorithm: \HiHTP}
Established thresholding and greedy compressed sensing algorithms, 
e.g.,  CoSaMP \cite{Needell08} and \emph{Hard Thresholding Pursuit (\htp)} 
\cite{Foucart:2011}, follow 
a common strategy: In each iteration, first, a proxy to the signal $\x$ is computed from the 
previous approximation to the signal and from the measurement vector $\y$. 
From this proxy a guess for the support of $\x$ is inferred by applying a thresholding operation. As a second step of the iteration, the best $\ell_2$-norm approximation to the measurements 
compatible with this support is calculated. 
In {\htp}, the \emph{$s$-sparse thresholding operator (TO)} 
$L_s: \CC^n \to \{ \Omega \subset [n] \mid |\Omega| = s \}$
is given by
 \begin{equation}\nonumber
	L_s(\z) \coloneqq \{\text{indices of $s$ largest entries of $\z$ in magnitude}\}. 
\end{equation}
This operator returns the support 
$L_s(\z) \subset [n]$ of the best $s$-sparse approximation to $\z$. Should there be several best $s$-sparse approximation (when several entries are equal in magnitude), we can arbitrarily choose any of them.

The basic idea of model-based compressed sensing \cite{BarCevDua10} is to adapt this TO to the model in order to improve the performance of the algorithm.
We denote the TO that yields the support of the best $(s,\sigma)$-sparse approximation to $\z \in \CC^{nN}$ by $L_{s, \sigma}$, with a similar handling of the case that there are several optimal approximations. Importantly, in this case, $L_{s, \sigma}(\z)$ can be easily calculated: 
We apply the $\sigma$-sparse TO to each block separately, 
select the $s$ active blocks as the largest truncated blocks in $\ell_2$-norm, and
collect the remaining $s\cdot \sigma$ indices in the set $L_{s, \sigma}(\z)$. 
This prescription is illustrated in Fig.~\ref{fig:HiThresholding}.

\begin{figure}[tb]
	\centering
	\input{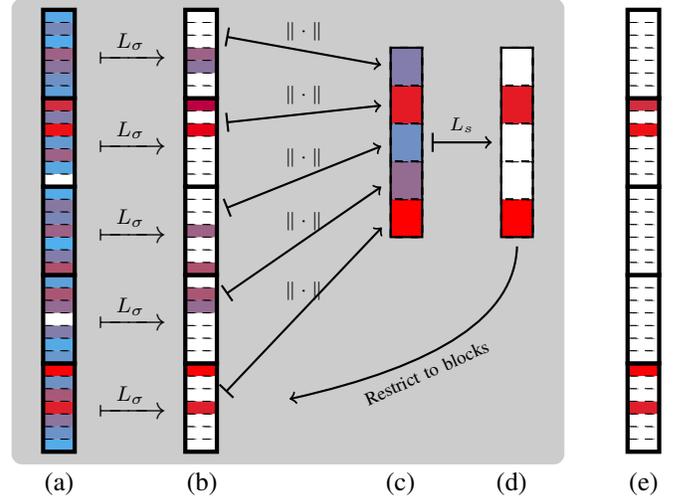}
	\vspace{-.6cm}
	\caption{In this figure, the evaluation of the hierarchical thresholding operator $L_{s,\sigma}$ is illustrated. Starting with a given dense vector (a), each block is thresholded to the best $\sigma$-sparse approximation (b). To determine the $s$ dominant blocks, the $\ell_2$-norm is calculated for each block. The resulting vector (c) of length $N$ is again thresholded to its best $s$-sparse approximation (d). The resulting blocks indicated by the $s$-sparse approximation (d) are selected from the $\sigma$-sparse approximation (b). The remaining $(s,\sigma)$-sparse support (e) is the output of $L_{s,\sigma}$.}
	\label{fig:HiThresholding}
\end{figure}

Using $L_{s,\sigma}$ instead of $L_{s}$ in the {\htp} yields Algorithm~\ref{alg:HiHTP}, which we call {\HiHTP}, as it is designed to recover a \emph{hi}erarchically structured sparsity.
\begin{algorithm}[tb] 
	\caption{(\HiHTP)} 
	\label{alg:HiHTP}
	\begin{algorithmic} [1]
 		\REQUIRE measurement matrix $\A$, measurement vector $\y$, block column sparsity $(s,\sigma)$
 		\STATE $\x^0 = 0$ 
 		\REPEAT
 			\STATE $\Omega^{k+1} = L_{s,\sigma} (\x^k + \A^\ast (\y - \A \x^k))$\label{alg:HiHTP:TH}
 			\STATE $\x^{k+1} = \argmin_{\vec z \in \CC^{Nn}} \{ \|\y - \A\vec z\|,\  \supp(\vec z) \subset \Omega^{k+1} \}$ \label{alg:HiHTP:LS}
 		\UNTIL stopping criterion is met at $\tilde{k} = k$
 		\ENSURE $(s,\sigma)$-sparse vector $\x^{\tilde{k}}$
	\end{algorithmic}
\end{algorithm}
As in the original {\htp} proposal, a natural stopping criterion is that two subsequent supports coincide, i.e.\ $\Omega^{k+1} = \Omega^k$.

A similar modification employing $L_{s,\sigma}$ can also be applied to other compressed sensing algorithms such as the Iterative Hard Thresholding~\cite{BlumensathDavies:2008}, as done in the companion work on mMIMO \cite{Wunder2019_TWC}, the Subspace Pursuit~\cite{DaiMilenkovic:2009} or Orthogonal Matching Pursuit, see e.g., Ref.\ \cite{Tropp:2004,LiuSun:2011} and references therein.

\paragraph*{Computational complexity}
The computational complexity of {\HiHTP} scales identically as for the original \htp. The algorithms only differ in the thresholding operators. 
Using a quick-select algorithm \cite{Hoare:1961} to perform the thresholding operator yields computational costs of $\mathcal{O}(Nn)$ for both algorithms. 
The overall performance is dominated by the costs of matrix vector multiplication with the measurement matrix $\A$ and $\A^\ast$, e.g., for
the calculation of the proxy. Therefore the computation time per iteration in general scales as $\mathcal{O}(mNn)$. If the measurement matrix allows for a fast matrix vector 
multiplication this scaling can be improved \cite{Needell08}.

\section{Analytical results} \label{sec:analytic}
For the analysis of hierarchically sparse recovery schemes, we use a special version of the general restricted isometry property (RIP) for unions of linear subspaces \cite{BluDav09}. In the following we formulate the \emph{hierarchical RIP (HiRIP)} for $(s,\sigma)$-sparse vectors. A generalized version of HiRIP for arbitrary hierarchical sparse vectors is given in Section~\ref{sec:generalHierachicalSparsity}. 

\begin{definition}[HiRIP]\label{def:RIP}
	Given a matrix $\A \in \KK^{m\times Nn}$, we denote by $\delta_{s,\sigma}$ the smallest $\delta \geq 0$ such that 
	\begin{equation}\label{eq:def:RIP}
	  (1-\delta)\|\x\|^2 \leq \|\A\x\|^2 \leq (1+\delta) \|\x\|^2 
	\end{equation}
	for all $(s,\sigma)$-sparse vectors $\x \in \KK^{nN}$.
\end{definition}

We will also make use of the standard $s$-sparse RIP constants $\delta_s$ fulfilling the condition \eqref{eq:def:RIP} for all $s$-sparse vectors. Since every $(s,\sigma)$-sparse vector is $s\cdot\sigma$-sparse, we immediately observe that $\delta_{s,\sigma} \leq \delta_{s \cdot \sigma}$.
Furthermore, the $(s,\sigma)$-sparse RIP constants are non-decreasing in both their indices, 
\begin{equation}\label{eq:delta-hierarchy}
\begin{aligned}
  \delta_{s,\sigma} \leq \delta_{s+1,\sigma} \, ,
  \quad
  \delta_{s,\sigma} \leq \delta_{s,\sigma+1}
\end{aligned}
\end{equation}
for all $s$ and $\sigma$.

The $s\cdot \sigma$-sparsity of $(s,\sigma)$-sparse vectors allows to apply standard compressed sensing algorithms for the recovery from linear measurements for which  recovery guarantees have been shown. Due to the their similarity, {\HiHTP} inherits the results for the success of  recovery for measurement matrices with small RIP constants that have been established for {\HTP} \cite{Foucart:2011}. See also Ref.\ \cite[Chapter 6.3]{FouRau13}. 
The RIP condition on the measurement matrix derived therein is $\delta_{3s} < 1/\sqrt{3}$.

One main technical insight of model-based compressed sensing \cite{BarCevDua10} is that the generalized RIP of 
Ref.\ \cite{BluDav09} allows for the same proof techniques as the standard RIP \cite{CanTao05} and leads to improved recovery guarantees. 
Following this strategy, we can establish less restrictive RIP-type conditions for successful recovery in terms of the custom tailored $(s,\sigma)$-sparse HiRIP constants. The resulting statement is the following.

\begin{theorem}[Recovery guarantee]\label{thm:recGarant}
	Suppose that the following HiRIP condition holds
	\begin{equation}\label{eq:recGarant:RIP}
		\delta_{3s,2\sigma} < \frac{1}{\sqrt{3}}.
	\end{equation}
	Then, for $\x \in \CC^{Nn}$, $\ev \in \CC^m$, and $\Omega \subset [N]\times [n]$ an $(s,\sigma)$-sparse support set, the sequence $(x^k)$ defined by {\HiHTP} (Algorithm~\ref{alg:HiHTP}) with $\y = \A\x\proj_\Omega + \ev$ satisfies, for any $k\geq 0$, 
	\begin{equation}
		\norm{\x^k - \x\proj_\Omega} \leq \rho^k \norm{\x^0 - \x\proj_\Omega} + \tau \norm{\ev},
	\end{equation}
	where 
	\begin{equation}
		\rho =  \left({\frac{2 \delta_{3s,2\sigma}}{1-\delta_{2s,2\sigma}^2}}\right)^{1/2} < 1  
	\end{equation}
	and $\tau \leq 5.15/(1-\rho)$.
\end{theorem}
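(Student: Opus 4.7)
The plan is to adapt Foucart's convergence proof for classical \HTP{} \cite{Foucart:2011} (see also \cite[Ch.~6]{FouRau13}) to the model-based hierarchical setting, replacing scalar sparsity counts by the $(s,\sigma)$-model and carefully tracking the hierarchical sparsity of unions of supports. The iteration decomposes into (i) a support-selection step governed by $L_{s,\sigma}$ and (ii) a least-squares step on the selected support; I will bound the error contributed by each and then compose them into a linear contraction.

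For step (i), set $v^k \coloneqq \x^k + \A^\ast(\y - \A\x^k)$. From $\y = \A\x\proj_\Omega + \ev$ one obtains the identity $v^k - \x\proj_\Omega = (\A^\ast\A - \Id)(\x^k - \x\proj_\Omega) + \A^\ast \ev$. Since $\Omega^{k+1} = L_{s,\sigma}(v^k)$ selects the support of the best $(s,\sigma)$-sparse approximation to $v^k$ and $\Omega$ itself is $(s,\sigma)$-sparse, we have $\|v^k\proj_{\Omega^{k+1}}\|^2 \geq \|v^k\proj_\Omega\|^2$, equivalently $\|v^k\proj_{\Omega\setminus\Omega^{k+1}}\| \leq \|v^k\proj_{\Omega^{k+1}\setminus\Omega}\|$. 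Expanding both sides using the identity above, invoking the triangle inequality on the symmetric difference $\Omega \triangle \Omega^{k+1}$, and applying $\|a\|+\|b\|\leq\sqrt{2}\sqrt{\|a\|^2+\|b\|^2}$ yields
\begin{equation*}
\bigl\|(\x^k-\x\proj_\Omega)\proj_{\overline{\Omega^{k+1}}}\bigr\| \leq \sqrt{2}\,\bigl\|((\A^\ast\A-\Id)(\x^k-\x\proj_\Omega))\proj_{\Omega\triangle\Omega^{k+1}}\bigr\| + \sqrt{2(1+\delta_{2s,2\sigma})}\,\|\ev\|.
\end{equation*}
Since $\supp(\x^k - \x\proj_\Omega) \subset \Omega\cup\Omega^k$ and $\Omega\triangle\Omega^{k+1} \subset \Omega\cup\Omega^{k+1}$ are both $(2s,2\sigma)$-sparse, the bilinear-form characterisation of the hierarchical RIP bounds the first summand by $\delta_{3s,2\sigma}\|\x^k - \x\proj_\Omega\|$, after a careful split of the test and error vectors into pieces supported on $\Omega$, $\Omega^{k+1}\setminus\Omega$, and $\Omega^k\setminus(\Omega\cup\Omega^{k+1})$.

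For step (ii), $\x^{k+1}$ satisfies $\supp(\x^{k+1}) \subset \Omega^{k+1}$ together with the normal equations on $\Omega^{k+1}$. Decomposing $\x^{k+1}-\x\proj_\Omega$ into its parts on $\Omega^{k+1}$ and on $\overline{\Omega^{k+1}}$ and applying the hierarchical RIP on the $(2s,2\sigma)$-sparse set $\Omega\cup\Omega^{k+1}$ gives $\|\x^{k+1}-\x\proj_\Omega\|^2 \leq (1-\delta_{2s,2\sigma}^2)^{-1}\,\|(\x^k-\x\proj_\Omega)\proj_{\overline{\Omega^{k+1}}}\|^2 + C\,\|\ev\|^2$ for an explicit constant $C$ depending only on $\delta_{2s,2\sigma}$. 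Substituting the estimate from (i) and extracting square roots yields the recurrence $\|\x^{k+1}-\x\proj_\Omega\| \leq \rho\,\|\x^k-\x\proj_\Omega\| + \tau'\,\|\ev\|$ with $\rho = (2\,\delta_{3s,2\sigma}^2/(1-\delta_{2s,2\sigma}^2))^{1/2}$; the hypothesis $\delta_{3s,2\sigma}<1/\sqrt{3}$ together with $\delta_{2s,2\sigma} \leq \delta_{3s,2\sigma}$ guarantees $\rho<1$. Iterating the contraction and summing the geometric series in $\|\ev\|$ produces the stated bound with $\tau \leq 5.15/(1-\rho)$.

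The hard part will be the bookkeeping in step (i): every bilinear RIP inequality entering the estimate must be shown to control its inner product via the $(3s,2\sigma)$-RIP constant rather than by the naïve $(3s,3\sigma)$-constant that a direct union-of-supports count would suggest. Achieving $2\sigma$ per block instead of $3\sigma$ relies on the decomposition across $\Omega$, $\Omega^{k+1}\setminus\Omega$, and $\Omega^k\setminus(\Omega\cup\Omega^{k+1})$ indicated above, combined with the disjoint-support version of the bilinear RIP, so that at no stage do three $(s,\sigma)$-sparse supports contribute simultaneously to the per-block sparsity count.
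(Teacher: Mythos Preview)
Your overall strategy matches the paper's: exploit the optimality of $L_{s,\sigma}$ to control the off-support error, use the normal equations on $\Omega^{k+1}$ for the on-support part, and combine into a contraction. Two issues, however. First, the quantity delivered by the thresholding step is $\normb{(\x\proj_\Omega)\proj_{\Omega\setminus\Omega^{k+1}}}=\normb{(\x^{k+1}-\x\proj_\Omega)\proj_{\overline{\Omega^{k+1}}}}$, not $\normb{(\x^k-\x\proj_\Omega)\proj_{\overline{\Omega^{k+1}}}}$; since $\supp(\x^k)\subset\Omega^k$ need not lie in $\Omega^{k+1}$, your version carries an extra piece of $\x^k$ on $\Omega^k\setminus\Omega^{k+1}$ that the thresholding inequality does not control. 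This slip propagates into step~(ii), which should feed forward the $\x^{k+1}$-version of the off-support term (compare the paper's inequality~\eqref{eq:recGarant:THobservation}).

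Second, and more substantively, your plan for recovering $\delta_{3s,2\sigma}$ rather than $\delta_{3s,3\sigma}$ via a disjoint-support bilinear split does not yield the constant as stated. Splitting both the test vector and $\x^k-\x\proj_\Omega$ across $\Omega$, $\Omega^{k+1}\setminus\Omega$, $\Omega^k\setminus(\Omega\cup\Omega^{k+1})$ does keep every pairwise union at $(2s,2\sigma)$, but reassembling the pairwise inner-product bounds by triangle inequality and Cauchy--Schwarz costs an extra factor $\sqrt{2}\cdot\sqrt{3}$, so you end up with $\sqrt{6}\,\delta_{2s,2\sigma}$ in place of $\delta_{3s,2\sigma}$ and the threshold $1/\sqrt{3}$ is lost. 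The paper avoids any splitting: it applies the operator-norm estimate of Proposition~\ref{prop:twoSupportsNorm} once to obtain $\normb{((\Id-\A^\ast\A)(\x^k-\x\proj_\Omega))\proj_{\Omega\Delta\Omega^{k+1}}}\leq\snormb{\Id-\A_T^\ast\A_T}\,\normb{\x^k-\x\proj_\Omega}$ with $T=\Omega\cup(\Omega^{k+1}\setminus\Omega)\cup\Omega^k$, then bounds that single operator norm by $\delta_{3s,2\sigma}$ via Observation~\ref{observation:supportUnions}, asserting that $\Omega\cup(\Omega^{k+1}\setminus\Omega)$ is $(2s,\sigma)$-sparse before taking the union with $\Omega^k$. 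You should scrutinise that last sparsity assertion closely; it, not any bilinear decomposition, is what the paper uses to get $2\sigma$ rather than $3\sigma$.
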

The complete proof is given in Appendix~\ref{app:proofs}. The proof proceeds verbatim along the lines of the proof of the convergence result for {\HTP} \cite{Foucart:2011, FouRau13}. The modified thresholding operator $L_{s,\sigma}$ can be treated analogously to standard thresholding operator in the original proof. The main difference is that while the proof for {\HTP} uses standard RIP constants to bound the deviation of the algorithm's output $\x^k$ from the original signal $\x_\Omega$, we are in a position to employ the HiRIP constants in these bounds for {\HiHTP}. The crucial observation regarding the HiRIP constants is the following. 
\begin{observation}[Support unions]\label{observation:supportUnions}
	For $i=1,2$ let $\Omega_i \subset [N]\times [n]$ be an $(s_i,\sigma_i)$-sparse support and $\A\in \KK^{m\times Nn}$
	with HiRIP constants $\delta_{s, \sigma}$. Then
	\begin{equation}\label{eq:supportUnions}
		\norm{\Id- (\A_{\Omega_1 \cup \Omega_2})^\ast \A_{\Omega_1 \cup  \Omega_2}}
		\leq 
		\delta_{s_1+s_2,\sigma_1+\sigma_2}.
	\end{equation}
\end{observation}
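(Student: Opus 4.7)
The plan is to reduce the operator-norm bound to the defining RIP inequality on the union of supports. First I would verify the structural claim that $\Omega_1 \cup \Omega_2$ is $(s_1+s_2,\sigma_1+\sigma_2)$-sparse: if $\Omega_i$ activates at most $s_i$ blocks, the union activates at most $s_1+s_2$ blocks, and within each activated block the number of indices present is at most $\sigma_1+\sigma_2$ (with equality only if the two per-block supports are disjoint). This is purely combinatorial and requires no calculation.

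Next I would rewrite the left-hand side as a supremum of a quadratic form. Writing $\Omega \coloneqq \Omega_1 \cup \Omega_2$ and noting that $\A_\Omega^\ast \A_\Omega - \Id$ is Hermitian on $\KK^{|\Omega|}$, its operator norm equals
\begin{equation}
  \sup_{\vec u \in \KK^{|\Omega|},\, \norm{\vec u}=1} \bigl| \braket{\vec u}{(\A_\Omega^\ast \A_\Omega - \Id)\vec u} \bigr|
  = \sup \bigl| \norm{\A_\Omega \vec u}^2 - \norm{\vec u}^2 \bigr|.
\end{equation}
Identifying $\vec u \in \KK^{|\Omega|}$ with the vector $\x \coloneqq \vec u \proj_\Omega \in \KK^{Nn}$ whose support lies inside $\Omega$, one has $\A_\Omega \vec u = \A \x$ and $\norm{\vec u} = \norm{\x}$, so the supremum equals $\sup_{\supp(\x)\subseteq \Omega,\, \norm{\x}=1} \bigl|\norm{\A \x}^2 - \norm{\x}^2 \bigr|$.

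Finally I would invoke Definition~\ref{def:RIP}. Since every such $\x$ is $(s_1+s_2,\sigma_1+\sigma_2)$-sparse by the first step, the RIP yields
\begin{equation}
  \bigl|\norm{\A\x}^2 - \norm{\x}^2\bigr| \leq \delta_{s_1+s_2,\sigma_1+\sigma_2}\,\norm{\x}^2,
\end{equation}
which gives the claimed bound after taking the supremum. The only mild subtlety—hardly an obstacle—is checking that per-block sparsities add correctly under union (rather than, say, capping at $n$); this is harmless because monotonicity of $\delta_{s,\sigma}$ in $\sigma$ (equation~\eqref{eq:delta-hierarchy}) absorbs the case where $\sigma_1+\sigma_2$ exceeds the block length. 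No nontrivial estimate is needed: the entire argument is the standard reduction of the Gram-matrix deviation to the RIP inequality, transported from classical $s$-sparsity to the hierarchical $(s,\sigma)$-sparse setting.
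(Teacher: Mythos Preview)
Your argument is correct and matches the paper's own justification: the paper simply notes that $\Omega_1\cup\Omega_2$ has at most $s_1+s_2$ active blocks with each block at most $\sigma_1+\sigma_2$-sparse, then appeals to the monotonicity~\eqref{eq:delta-hierarchy} of the RIP constants. The equivalence you spell out between the operator-norm bound and the RIP inequality is precisely the content of the paper's Proposition~\ref{prop:characterisationsRIP}, so your more explicit write-up and the paper's one-line remark are the same proof at different levels of detail.
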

The statement follows directly from hierarchy \eqref{eq:delta-hierarchy}  of the HiRIP constants and the observation that the union $\Omega_1 \cup \Omega_2$ has at most $s_1+s_2$ blocks and each block is at most $\sigma_1+\sigma_2$ sparse. 

In fact, one can prove an alternative bound replacing the right hand side of \eqref{eq:supportUnions} by $3 \max\{\delta_{s_{\mathrm{max}},\sigma_1+\sigma_2}, \delta_{s_1+s_2, \sigma_{\mathrm{max}}}\}$, where $s_{\mathrm{max}} \coloneqq \max\{s_1,s_2\}$ and $\sigma_{\mathrm{max}} \coloneqq \max\{\sigma_1,\sigma_2\}$.  The latter bound uses smaller HiRIP constants compared to Observation~\ref{observation:supportUnions}. However, for Gaussian measurements the formulation of Observation~\ref{observation:supportUnions} will lead to slightly smaller constants. 
The observation allows us to bound terms involving the sum of two and more $(s,\sigma)$-sparse vectors. Employing this in the proof of Theorem~\ref{thm:recGarant} yields the modified RIP condition~\eqref{eq:recGarant:RIP}. 

The trivial bound $\delta_{2s,2\sigma} \leq \delta_{4s\sigma}$ does not indicate an improvement by the theorem compared to the established bound $\delta_{3s} < 1/\sqrt{3}$. 
But the decreased number of subspaces which contribute to $(s,\sigma)$-sparse vectors compared to the set of all $s\sigma$-sparse vectors allows us to provide tighter bounds for $\delta_{s,\sigma}$ compared to $\delta_{s\sigma}$ for specific random matrices.

\paragraph{Gaussian measurements}
Building on Ref.~\cite{BarCevDua10,BluDav09}, we establish the following result for the case of real Gaussian measurement matrices.
\begin{theorem}[HiRIP for real Gaussian measurements]\label{thm:gaussianBscRIP}
	Let $\A$ be an $m \times (N\cdot n)$ real matrix with {i.i.d.} Gaussian entries and $m < Nn$. 
	For $\epsilon, \delta > 0$, assume that 
	\begin{equation}
		m \geq \frac{2}{c \delta^2}\left( s \ln \left( \frac{\e N}{s}\right) + s\sigma \ln\left(\frac{ \e n}{\sigma }\cdot \frac{ 12 }{ \delta}\right) +\ln(\epsilon^{-1}) \right),
	\end{equation}
	with $c>0$ a universal constant.
	Then, with probability of at least $1 - \epsilon$, the restricted isometry constant $\delta_{s,\sigma}$ of $\A/\sqrt{m}$ satisfies
	\begin{align}
		\delta_{s,\sigma} \leq \delta. 
	\end{align}
\end{theorem}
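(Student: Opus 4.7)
The plan is to present the set of $(s,\sigma)$-sparse vectors as a union of coordinate subspaces of $\RR^{Nn}$ and invoke Theorem 3.3 of [BluDav09], which provides a ready-made RIP bound for real Gaussian measurements on any finite union of subspaces. That result reduces the analysis to two combinatorial inputs: the maximum dimension $D$ of a subspace in the union, and the number $L$ of such subspaces.

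First I would carry out the counting. An $(s,\sigma)$-sparse support consists of a choice of $s$ active blocks from $N$ possibilities together with a choice of $\sigma$ active positions within each active block. Therefore the union consists of
\begin{equation}
L = \binom{N}{s}\binom{n}{\sigma}^s
\end{equation}
coordinate subspaces, each of dimension $D = s\sigma$. Using the standard entropy estimate $\binom{a}{b}\le (\e a/b)^b$, I would bound
\begin{equation}
\ln L \;\le\; s\ln\!\Bigl(\tfrac{\e N}{s}\Bigr) \;+\; s\sigma\ln\!\Bigl(\tfrac{\e n}{\sigma}\Bigr),
\end{equation}
which matches the first two terms in the asserted sampling requirement.

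Next I would apply [BluDav09, Thm.~3.3], whose structure is: build a $\delta/4$-net of size at most $(12/\delta)^{D}$ on the unit sphere of each subspace (giving a total net size $\le L(12/\delta)^{D}$), then union-bound the Johnson--Lindenstrauss-type concentration of $\|\A\x\|^2/m$ around $\|\x\|^2$ over this net, and finally extend from the net to the entire union by a Banach-space argument. Substituting the bounds on $L$ and $D$ into the resulting sample-complexity condition, and choosing the failure budget equal to $\epsilon$, will produce an inequality of the form
\begin{equation}
m \;\gtrsim\; \frac{1}{\delta}\Bigl( s\ln(\e N/s) + s\sigma\ln(\e n/\sigma) + \ln(12/\delta) + \ln(1/\epsilon) \Bigr).
\end{equation}

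The main obstacle is purely bookkeeping: to recover the precise prefactor $36/7$ stated in the theorem and the correct placement of the $\ln(12/\delta)$ term, one has to track the explicit constants in the sub-exponential Gaussian tail bound for $\|\A\x\|^2-m\|\x\|^2$ used inside [BluDav09, Thm.~3.3], and choose the net resolution so that the covering and the deviation terms combine cleanly. No new ideas beyond [BluDav09] are required; the entire gain over the naive $s\sigma$-sparse RIP bound comes from replacing the combinatorial count $\binom{Nn}{s\sigma}$ by the much smaller $\binom{N}{s}\binom{n}{\sigma}^s$, which is precisely the payoff of exploiting the hierarchical structure.
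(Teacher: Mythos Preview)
Your proposal is correct and follows exactly the same route as the paper: invoke Theorem~3.3 of \cite{BluDav09} for Gaussian measurements on a union of subspaces, count the $(s,\sigma)$-sparse supports as $L=\binom{N}{s}\binom{n}{\sigma}^{s}$, bound the binomial coefficients by $(\e a/b)^b$, and substitute. The paper's proof is precisely this two-line reduction, citing \cite{BluDav09} directly for the constant $36/7$ and the net-based argument you sketch, so there is nothing further to add.
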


\begin{proof}[Proof]
The bounds on RIP constants for typical random matrices, {e.g.,} Gaussian matrices or matrices with sub-Gaussian rows, proceed in two steps, see e.g., Ref.\ \cite[Chapter~9]{FouRau13}. From the properties of the specific random measurement matrix they derive a bound on the probability that $\norm{\A \x}^2 - \norm{\x}^2 \leq \epsilon \norm{\x}^2$ for a fixed $\x$. With such a bound the RIP constant can be upper bounded by taking the union bound over all relevant subspaces that might contain $\x$. For example, for the standard RIP constant $\delta_s$ for $s$-sparse vectors in $\RR^n$ one has to consider the $\binom{n}{s}$ subsets of $[n]$ with cardinality $s$.  In this way, one establishes a bound on $\delta_s$ to hold with high probability provided that the number of samples of the measurement matrix is lower bounded, $m \geq \tilde m$, with $\tilde m \in \mathcal{O}\bigl(s\ln (n/s)\bigr)$.

Theorem~\ref{thm:gaussianBscRIP} is a direct corollary of Theorem~1 Ref.\ \cite{BarCevDua10} applied to $(s,\sigma)$-sparse vectors. Said theorem establishes that an {i.i.d.} Gaussian matrix $\A \in \RR^{m\times d}$ has the RIP property with RIP constant upper bounded by $\delta$ with respect to vectors of a union of $L$ $k$-dimensional subspaces with probability of $1-\epsilon$ provided that 
	\begin{equation}\label{eq:bludav:samplingcomplexity}
		m \geq \frac{2}{c \delta^2} \left( \ln(2L) + k\ln\left(\frac{12}{\delta}\right) + \ln{\epsilon^{-1}} \right).
	\end{equation}
	An $(s, \sigma)$-sparse signal is in the union of \begin{equation}
	\binom{N}{s}\binom{n}{\sigma}^{\sigma} \leq \left(\frac{eN}{s}\right)^s \left(\frac{en}{\sigma}\right)^{s\sigma} 
		\end{equation}
	$s\sigma$-dimensional subspaces.
	Thus, \eqref{eq:bludav:samplingcomplexity} yields the statement of Theorem~\ref{thm:gaussianBscRIP} for the case of $(s, \sigma)$-sparse vectors. 
\end{proof}

Combining Theorem~\ref{thm:recGarant} and Theorem~\ref{thm:gaussianBscRIP}, we establish the successful recovery of $(s,\sigma)$-sparse vectors from Gaussian measurements provided that the number of samples $m \geq \tilde{m}$, where the lower bound on the number of samples $\tilde m$ parametrically scales like
\begin{equation}
	\tilde{m} \in \mathcal{O} \bigl(s \ln ({N}/{s}) + s \sigma \ln ({n}/{\sigma})\bigr).
\end{equation}
For Gaussian measurements the standard RIP analysis of {\HTP} in our setting yields recovery guarantees for a number of samples $m \geq \tilde{m}$ with parametric scaling
	$\tilde{m} \in \mathcal{O}\bigl(s\sigma \ln ({Nn}/({s\sigma}))\bigr) =\mathcal{O}\bigl(s\sigma \ln ({N}/{s}) +s\sigma \ln ({n}/({\sigma})) \bigr)$,
see, e.g., Ref.\ \cite{FouRau13}. 
We observe that in the regime of $N \gg n$ (i.e. the number of blocks is much larger than the size of each block), the required number of samples may be significantly lower. 

For the direct application (see Section~\ref{subsec:application}) similar improved bounds for subsampled Fourier matrices are required. However, the different proof techniques render the adaption of such bounds to hierarchical sparse signals more difficult. We leave analytic results for the recovery of hierarchical sparse signals from partial Fourier measurements to future work. On the other hand, in Section~\ref{sec:numerics} we find
numerically similar results for Gaussian measurements and partial Fourier measurements. Moreover, in the next section
we find a HiRIP characterization for Kronecker product measurements which is important for the applications.
\paragraph{Kronecker product measurements}

A second class of measurement matrices that feature the HiRIP are Kronecker products of matrices satisfying the RIP. The result is given in the following theorem:
\begin{theorem}\label{thm:hirip}
  Given $\A \in \KK^{M\times N}$ having $s$-sparse RIP constant $\delta^\A_s$ and $\B \in \KK^{m\times n}$ with $\sigma$-sparse RIP constant $\delta^\B_\sigma$,  
  \begin{equation}
    \A \otimes \B : \KK^{Nn} \to \KK^{Mm}
  \end{equation}
  has the $(s,\sigma)$-sparse HiRIP constant $\delta_{s,\sigma}$ satisfying
  \begin{equation}
     \delta_{s,\sigma} \leq \delta^\A_s + \delta^\B_\sigma + \delta^\A_s\delta^\B_\sigma.
   \end{equation} 
\end{theorem}

Before presenting the proof of Theorem \ref{thm:hirip}, we need to introduce some 
notation. First, we denote by $\vecmap: \KK^{N\times n} \to \KK^{Nn}$ the canonical isomorphism of column-wise vectorization. In other words, $\vecmap$ is defined by linear extension of the requirement $\vecmap(\vec E_{i,j}) = \vec{e}_i \otimes \vec{e}_j$, where $\vec E_{i,j} = \vec{e}_i \vec{e}_j^T \in \KK^{N\times n}$ denotes the matrix with only one non-vanishing unit entry in the $i$-th row and $j$-th column. The Kronecker product of matrices $\A$ and $\B$ is always understood as 
\begin{equation}
  \A \otimes \B = \begin{pmatrix} 
    a_{1,1} \B& \ldots & a_{1,N}\B  \\
    \vdots & \ddots & \vdots \\
    a_{m,1} \B & \ldots & a_{m,N}\B
  \end{pmatrix}.
\end{equation}
This convention justifies the term \emph{column-wise vectorization}. 

It will be convenient to also implicitly make use of \emph{row-wise vectorization}, which can be defined as $\vec X \mapsto \vecmap(\vec X^T)$.  Passing from one vector representation to the other amounts to applying the \emph{flip operator} $F_{N,n}: \KK^{Nn} \to \KK^{Nn}$. The latter is defined as the linear extension of the mapping $\vec{e}_i \otimes \vec{e}_j \mapsto \vec{e}_j \otimes \vec{e}_i$.
The action it induces on linear operators $\A\otimes \B$ that are defined through Kronecker products is the same as switching the order of the matrices involved. To be precise:
\begin{lemma}\label{lem:flipping} For $\A \in \KK^{M\times N}$ and $\B \in \KK^{m\times n}$ and $\vec X \in \KK^{N\times n}$ it holds that 
\begin{equation*}
  \A \otimes \B= F_{m,M} (\B \otimes \A) F_{N,n} 
\end{equation*}
and
\begin{equation*}\label{eq:fliptransp}
  F_{N,n} \vecmap(\vec X) = \vecmap(\vec X^T). 
\end{equation*}
\end{lemma}

We now have all the tools we need to prove Theorem \ref{thm:hirip}.

\begin{proof}[Proof of Theorem~\ref{thm:hirip}]
Let $\x \in \KK^{Nn}$ be hierarchically $(s,\sigma)$-sparse. With the help of Lemma~\ref{lem:flipping} we find
\begin{align*}
  \norm{(\A \otimes \B)\x}^2  &= \norm{(\A \otimes \Id_n)(\Id_N \otimes \B) \x}^2 \\
  &=\norm{F_{m,M} (\Id_n \otimes \A) F_{N,n} (\Id_N \otimes \B) \x}^2 \\
  &= \norm{(\Id_n \otimes \A) F_{N,n} (\Id_N \otimes \B) \x}^2,
\end{align*}
where the last line follows from the fact that $F_{m,M}$ is unitary.  The vector $(\Id_N \otimes \B)\x$ has only non-vanishing entries in $s$ of its $N$ blocks. Therefore, the flipped vector $\vec{h} \coloneqq F_{N,n}(\Id_N \otimes \B)\x$ consists of blocks $\vec{h}_i \in \KK^N$ with $i \in [n]$ that are at most $s$-sparse each. This allows us to apply the $s$-sparse RIP property of $\A$ for each of the blocks
\begin{align} \label{eq:crucialStep1}
  &\norm{(\Id_n \otimes \A) \vec{h}}^2 = \sum_{i\in [n]} \norm{\A \vec{h}_i}^2  \leq (1+ \delta_s) \norm{\vec{h}}^2.
\end{align}
Making use of the unitarity of the flip once again, the $\ell_2$-norm of $\vec{h}$ is identical to 
\begin{align} \label{eq:crucialStep}
  \norm{\vec{h}}^2 &= \norm{(\Id_N \otimes \B)\x}  = \sum_{i \in [N]} \norm{\B\x_i},
\end{align}
where $\x_i \in \KK^n$ $i \in [N]$ are the $\sigma$-sparse blocks of $\x$. Every term of the sum is bounded by the $\sigma$-sparse RIP of $\B$ yielding 
\begin{equation*}
  \norm{\vec{h}}^2 \leq (1+\delta_\sigma)\norm{\x}^2. 
\end{equation*}
In summary, we have established
\begin{equation*}
  \norm{(\A\otimes \B)\x}^2 \leq (1+\delta_s) (1+ \delta_\sigma) \norm{\x}.
\end{equation*}
The lower RIP bound can be derived in the same way, completing the proof. 
\end{proof}

The main consequence of Theorem~\ref{thm:hirip} is that it allows to construct a new class of measurement matrices for which the \HiHTP\ algorithm is guaranteed to succeed. More precisely, we get the following corollary. 

\begin{corollary}\label{cor:recovery}
  Let $\A \in \KK^{M\times N}$ and $\B \in \KK^{m\times n}$, and suppose that the following RIP-conditions hold
  \begin{align*}
    \delta^{\A}_{3s}, \delta^{\B}_{2\sigma} \leq \sqrt{\frac{\sqrt{3}+1}{\sqrt{3}}} -1. 
  \end{align*}
  Then, for $\x \in \KK^{nN}$, $\vec e \in \KK^{Mm}$ and $\Omega \subseteq [N] \times [n]$ and $(s,\sigma)$-sparse support set, the sequence $\x^k$ defined by the \HiHTP\ Algorithm~\ref{alg:HiHTP} with $y = (\A \otimes \B) \x\proj_\Omega +\ev$ satisfies, for any $k\geq 0$
  \begin{align*}
    \norm{ \x^k - \x \proj_\Omega} \leq \rho^k \norm{\x^0- \x \proj_\Omega} + \tau \norm{\ev},
\end{align*}   
where
\begin{align*}
  \rho = \left(\frac{2(\delta^{\A}_{3s}+ \delta^{\B}_{2\sigma} + \delta^{\A}_{3s} \delta^{\B}_{2\sigma})} {1 -(\delta^{\A}_{3s}+ \delta^{\B}_{2\sigma} + \delta^{\A}_{3s} \delta^{\B}_{2\sigma})^2 }\right) <1.
\end{align*}
\end{corollary}
\begin{proof}
  We simply need to note that Theorem \ref{thm:hirip} implies that $$\delta_{3s,2\sigma}^{\A \otimes \B} \leq (\delta^{\A}_{3s}+ \delta^{\B}_{2\sigma} + \delta^{\A}_{3s} \delta^{\B}_{2\sigma} )< \frac{1}{\sqrt{3}}.$$
  The rest follows from Theorem~\ref{thm:recGarant}.
\end{proof}

Taking any pair of random matrices $\A \in \KK^{M\times N}$ and $\B \in \KK^{m\times n}$ both guaranteed to possess the $s$- and $\sigma$-RIP with high probability, respectively, $\A \otimes \B$ will satisfy the $(s,\sigma)$-HiRIP with high probability as well. As an example, we can use Gaussian random matrices with shapes $M \gtrsim s \log N $ and $m \gtrsim \sigma \log n $, resulting in a measurement matrix $\vec A \otimes \vec B \in \KK^{\mathfrak{m}\times nN}$ where $$\mathfrak{m} \gtrsim s\sigma \log N\, \log n .$$
Hence, a measurement scheme using Kronecker matrices will need slightly more measurements than the fully Gaussian matrices to have the HiRIP. 
This price could, however, sometimes be worth paying. First, we get a vast reduction in space needed to store the matrix: $MN + mn$ instead of $MN\cdot mn$. Also, as has been discussed in the introduction, there are applications where the Kronecker structure of a measurement process is inherent. 

\newcommand{\block}{\mathrm{block}}
\newcommand{\abs}[1]{\left\vert #1 \right\vert}
\newcommand{\set}[1]{\left\lbrace #1\right\rbrace}
\newcommand{\sse}{\subseteq}
\newcommand{\sprod}[1]{\left\langle #1 \right\rangle}
\newcommand{\geqsim}{\gtrsim}
\newcommand{\leqsim}{\lesssim}

\paragraph*{Relations to hierarchical coherence measures}

In Ref.~\cite{SprechmannEtAl:2011}, the authors introduce the {\HiLasso} algorithm, and conduct a theoretical analysis based on different generalized coherence measures. In the following, we will connect two of these measures with the RIP-based analyses.
Let us denote the column of $\vec A \in \KK^{m \times Nn}$ acting on the $j$-th entry of the $i$-th block by $\vec{a}_{i,j}$. The \emph{subcoherence} of $\vec{A}$ is then defined as 
\begin{equation*}
	\nu(\vec{A}) = \sup_{i \in [N]} \sup_{\substack{k,j \in [n] \\ k \neq j}} \abs{\sprod{\vec{a}_{i,k}, \vec{a}_{i,j}}}.
\end{equation*}
Intuitively, $\nu$ measures the maximal mutual coherence within a single block.
To also account for the coherence between the blocks, we consider the following notion of 
\emph{sparse block-coherence}
\begin{align*}
	\mu_{\mathrm{block}}^{\sigma \sigma }(\vec{A}) &= 
	\sup_{\substack{i,j\in [N] \\i\neq j}}
	\tfrac{1}{n}\rho^{\sigma\sigma}(\vec{A}_{\Omega_i}^*\vec{A}_{\Omega_j}).
\end{align*}
Here, $\Omega_i$ is the entire $i$-th block of indices, $\Omega_i = \set{i}\times [n]$ and  $\rho^{\sigma\sigma}$ is the  \emph{$\sigma$-sparse singular value},
\begin{equation*}
	\rho^{\sigma\sigma}(\vec{A}_{\Omega_i}^*\vec{A}_{\Omega_j}) 
	= \sup_{\substack{
		\norm{\vec{u}}, \norm{\vec{v}}\leq 1  
		\\ 
		\text{$\vec{u}$, $\vec{v}$ $(1,\sigma)$-sparse} \\
		\supp \vec{u} \sse \Omega_i, \supp \vec{v} \sse \Omega_j}}\sprod{\vec{u},\vec{A}_{\Omega_i}^*\vec{A}_{\Omega_j}\vec{v}}.
\end{equation*}

Coherence measures are typically coarser notions for ensuring successful recovery compared to RIP. For example, for matrices with normalized columns, the standard coherence 
\begin{equation*}
	\mu(\A) = \sup_{\substack{(i,i') \neq (j,j') \in [N]\times [n]}} \abs{\langle \vec a_{i,i'},\vec a_{j,j'} \rangle}
\end{equation*}
dominates the standard $s$-sparse RIP constant $\delta_s$ as \cite[Theorem 5.3]{FouRau13}
\begin{equation}\label{eq:standardCoherenceRelation}
	\delta_s(\A)\leq (s - 1) \mu(\A).
\end{equation}
For the HiRIP-constants we can derive a similar relation using  
the subcoherence $\nu$ and the sparse block-coherence $\mu^{\sigma \sigma}_{\mathrm{block}}$.
\begin{theorem}\label{thm:relationToCoherence}
Assume that $\vec{A} \in \KK^{m\times Nn}$ has normalized columns $\vec a_i$, i.e. $\norm{\vec{a}_{i}}=1$ for all $i \in [Nn]$. Then the HiRIP constant of $\vec A$ satisfies
\begin{align} \label{eq:RIPvsCoherence}
    \delta_{s, \sigma} \leq (s-1) \cdot n\mu_{\block}^{\sigma \sigma}(\A) + (\sigma-1) \cdot \nu(\A) .
\end{align}
\end{theorem}

Note the relation $\mu_{\block}^{\sigma \sigma} \leq  \frac{\sigma}{n} \mu$. 
Hence, the unusual scaling in $n$ here is merely due to the definition of $\mu_{\block}^{\sigma\sigma}$.
In fact, for $n=1$ (i.e. when there is no hierarchal structure), $\mu_\block^{\sigma\sigma}(\vec{A}) \leq \sigma \mu(\vec{A})$ and $\nu(\A)=0$. 
Hence, Theorem~\ref{thm:relationToCoherence} reproduces the well-known result \eqref{eq:standardCoherenceRelation} in this case.

\begin{proof}[Proof of Theorem~\ref{thm:relationToCoherence}]
    Let $\vec{x}$ be an $(s,\sigma)$-sparse vector. 
    Interpreting $\vec x$ as a member of $\KK^{N \times n}$, we may write
    \begin{align*}
        \vec{x} = \sum_{i \in S} \vec{e}_i \otimes \vec{g}_i
    \end{align*}
    for a set $S \sse [N]$ with $\abs{S}\leq s$, the standard unit vectors $\vec{e}_i \in \KK^N$, and $\sigma$-sparse vectors $\vec{g}_i \in \KK^n$. Consequently,
    \begin{align*}
        &\norm{\vec{A}\vec{x}}^2 
        = \sprod{\vec{A}\vec{x},\vec{A}\vec{x}} = \sum_{i,k \in S} \sprod{\vec{A}(\vec{e}_i \otimes \vec{g}_i),\vec{A}(\vec{e}_k \otimes \vec{g}_k)} \\
        &\quad=\sum_{i \in S}  \norm{\vec{A}(\vec{e}_i \otimes \vec{g}_i)}^2 + \sum_{i\neq k \in S}  \sprod{\vec{A}(\vec{e}_i \otimes \vec{g}_i),\vec{A}(\vec{e}_k \otimes \vec{g}_k)}.
    \end{align*}
We now derive bounds for both sums individually. 
To this end, let us define the operator $\vec A_i: \KK^n  \to \KK^m$, $\vec{g} \mapsto \vec{A}(\vec{e}_i \otimes \vec{g})$ for the columns of $\vec A$ corresponding to the $i$-th block. 
Then, for the first summand 
using \eqref{eq:standardCoherenceRelation} yields
\begin{align*}
	&\abs{\norm{\vec{A}(\vec{e}_i \otimes \vec{g}_i)}^2-\norm{\vec{g}_i}^2} 
	\leq \delta_\sigma(\vec{A}_i) \norm{\vec{g}_i}^2 \\
	&\qquad\leq (\sigma-1) \mu(\vec{A}_i) \norm{\vec{g}_i}^2  
	\leq (\sigma-1)\cdot \nu(\A) \norm{\vec{g}_i}^2
\end{align*}
Taking the sum over $i$ gives
\begin{align*}
	\abs{\sum_{i \in S}  \norm{\vec{A}(\vec{e}_i \otimes \vec{g}_i)}^2  - \norm{\x}^2} \leq (\sigma-1)\nu(\A)\norm{\x}^2,
\end{align*}
since $\norm{\x}^2 = \sum_{i \in S} \norm{\vec{g}_i}^2$.

We now move on to the second term involving the sum of cross-block terms.
Letting $S_i=\set{i} \times \mathrm{supp}(\vec{g}_i)$ for each $i$ and defining $\vec{C}:(\KK^\sigma)^s\to (\KK^{\sigma})^s$ through
\begin{align*}
	\vec{C}\left( \sum_{i\in S} \vec{e}_i \otimes \vec{g}_i\right) =  \left(\sum_{\substack{k\in S\\ k\neq i}} \vec{A}_{S_i}^*\vec{A}_{S_k}(\vec{e}_k \otimes \vec{g}_k)\right)_{i \in S},
\end{align*}
we see that the term can be written as $\sprod{\vec{x},\vec{C} \vec{x}}$. Since $\vec{C}$ is Hermitian, we have $\abs{\sprod{\vec{x},\vec{C} \vec{x}}} \leq \abs{\lambda_{\mathrm{max}}(\vec{C})} \norm{\vec{x}}^2$, where $\abs{\lambda_{\mathrm{max}}(\vec{C})}$ is the largest eigenvalue of $\vec{C}$ in magnitude. If however $\sum_{i=1}^s \vec{e}_i \otimes \vec{v}_i$ is a corresponding eigenvector, we have for each $i$
    \begin{align*}
     \abs{ \lambda_{\mathrm{max}}(\vec{A})} \norm{\vec{e}_i \otimes \vec{v}_i}^2 &=\abs{ \sum_{\substack{k\in S\\ k\neq i}} \sprod{\vec{e}_i \otimes \vec{v}_i, \vec{A}_{S_i}^*\vec{A}_{S_k}(\vec{e}_k \otimes \vec{v}_k)}} \\ 
      &\leq (s-1) \cdot n\, \mu_{\mathrm{block}}^{\sigma\sigma}(\A) \sup_{k} \norm{\vec{e}_k \otimes \vec{v}_k}^2.
    \end{align*}
    We used that $\vec{e}_k \otimes \vec{v}_k$ and $\vec{e}_i\otimes \vec{v}_i$ are $(1, \sigma)$-sparse supported on the disjoint blocks $\Omega_i$ and $\Omega_k$, respectively. Letting $i$ be such that $\norm{\vec{e}_i \otimes \vec{v}_i}^2$ is maximal, the above inequality implies that $\abs{\lambda_{\mathrm{max}}(\vec{C})}\leq (s-1)n\mu^{\sigma\sigma}_{\mathrm{block}}(\A)$.  We hence have 
   \begin{align*}
   		\abs{\sum_{i\neq k \in S}  \sprod{\vec{A}(\vec{e}_i \otimes \vec{g}_i),\vec{A}(\vec{e}_k \otimes \vec{g}_k) }}\leq (s-1)\cdot n\, \mu_{\text{block}}^{\sigma\sigma}(\vec{A})\norm{\vec{x}}^2.
   \end{align*}
   Combining the inequalities derived for both terms, we arrive at the theorem's statement.
\end{proof}
Theorem~\ref{thm:hirip} stated that measurement matrices that have a Kronecker product structure inherit a HiRIP from the RIP of the constituent matrices. 
Similarly, the generalized coherence measures may be bounded by the mutual coherences of $\vec{A}$ and $\vec{B}$.
\begin{theorem}\label{thm:KroneckerCoherence}
Let $\vec A \in \KK^{M \times N}$ and $\vec B \in \KK^{m \times n}$ with normalized columns $\vec a_i$ and $\vec b_i$, respectively, i.e.\ $\norm{\vec{a}_i}=\norm{\vec{b}_j}=1$ for each $i\in [N]$ , $j \in [n]$.
For the matrix $\vec{A} \otimes \vec{B}$, we have
\begin{align*}
\nu(\vec{A}\otimes \vec{B}) &= \mu(\vec{B}) \\
     \mu_\block^{\sigma\sigma} (\vec{A}\otimes \vec{B}) &\leq \frac{1}{n}\mu(\vec{A}) ( 1 + (\sigma-1) \cdot \mu(\vec{B})).
\end{align*}
\end{theorem}
The above inequalities together with \eqref{eq:RIPvsCoherence} imply that for a Kronecker measurement matrix
\begin{align*}
    \delta_s(\vec{A} \otimes \vec{B}) \leq & (\sigma-1)\cdot \mu(\vec{B}) + (s-1)\cdot\mu(\vec{A}) \\
    &+ (s-1)\cdot(\sigma-1)\cdot \mu(\vec{A})\mu(\vec{B}).
\end{align*}
Again, the mutual coherence conditions for the standard $s$-RIP on the local sparsity level, {i.e.} $(s-1)\mu(\vec{A}), (\sigma-1) \mu(\vec{B}) \leq C \delta$ leads to a $(s,\sigma)$-sparse HiRIP.

\begin{proof}[Proof of Theorem~\ref{thm:KroneckerCoherence}]
We begin by inspecting the subcoherence. We have for each $i\in [N]$ and $j, k \in [n]$, $j \neq k$
\begin{align*}
     \abs{\sprod{(\vec{A}\otimes \vec{B})_{i,k}, (\vec{A}\otimes \vec{B})_{i,j}}} &  = \abs{\sprod{\vec{a}_i,\vec{a}_i} \sprod{\vec{b}_k, \vec{b}_j}} =  \abs{\sprod{\vec{b}_k, \vec{b}_j}}.
\end{align*}
Maximizing over $j \neq k$ yields the claimed equality.
 
To estimate 
 $\mu_\block^{\sigma\sigma}$, consider arbitrary indices $i \neq j$ and $\sigma$-sparse and normalized vectors $\vec{x}_i$, $\vec{x}_j$. We then have
 \begin{align*}
   &\abs{\sprod{\vec{A}_{\Omega_i}(\vec{e}_i\otimes\vec{x}_i), \vec{A}_{\Omega_j}(\vec{e}_j \otimes \vec{x}_j)}} =  \abs{\sprod{\vec{a}_i, \vec{a}_j}} \abs{\sprod{\B \vec{x}_i , \B\vec{x}_j}} \\
   & \quad \quad \leq \mu(\vec{A})  \abs{\sprod{\B \vec{x}_i , \B\vec{x}_j}} \leq \mu(\vec{A})  \sup_{\substack{\norm{\vec{x}} \leq 1 \\ \vec{x} \text{ $\sigma$-sparse}}}  \norm{\B \vec{x}}^2 \\
   & \quad \quad \leq \mu(\vec{A})( 1+ \delta_\sigma(\vec{B}))  \leq \mu(\vec{A})(1 + (\sigma-1)\cdot \mu(\vec{B}))
 \end{align*}
 In the last estimate, we again applied \eqref{eq:standardCoherenceRelation}.  
\end{proof}

\section{Numerical results}
\label{sec:numerics}

In this section we compare the performance of {\HiHTP}, {\htp} and the {\HiLasso}-algorithm in numerical experiments for Gaussian and Fourier measurements.
All algorithms have been implemented in Matlab \cite{matlab}. For the implementation of {\HiLasso} the convex optimization problem \eqref{eq:HiLassoProblem} was directly solved using CVX \cite{cvx} with MOSEK \cite{MOSEK} to avoid  ambiguities in the implementation of the 
algorithm. Before the {\htp} and {\HiHTP} algorithms are applied the columns of the measurement matrix are normalised in $\ell_2$-norm. The entries of the result of the algorithms are subsequently multiplied by the normalising factors to restore the $\ell_2$-norms of the columns of the actual measurement matrix.  

\paragraph*{Block recovery rates}
In CS a common performance measure is the fraction of recovered signals given a certain amount of samples. In the context of block-structured signals there is a different measure of performance available, which is well motivated in multiple applications, e.g., in OFDM. We have consider a \emph{block} recovered if the reconstructed part of the signal deviates from the original signal by less than $\epsilon$ in $\ell_2$-norm. The choice of $\epsilon$ depends on the specific application and, in particular, its noise model. For a given number of samples we count the total number of recovered blocks. Furthermore, we distinguish between the number of recovered blocks which are non-zero and of those which are zero in the original signal. 

\begin{figure}[tb]
	\centering
	\input{plotGaussianComplete}
	\hspace{-.2cm}
	\input{plotGaussianBlocks}
	\vspace{-.6cm} 
	\caption{Left: Number of recovered signals from $100$ noiseless Gaussian samples over the number of measurements $m$ for {\HTP}, {\HiLasso} and \HiHTP,
	the latter introduced here. The signals consist of $N=30$ blocks of size $n=100$ with $s=4$ blocks having $\sigma=20$ non-vanishing real entries. Right: Number of recovered blocks over the number of measurements $m$ for {\htp} and \HiHTP. 
	  The dashed and dotted lines indicate the average number of correctly recovered zero and non-zero blocks, respectively. 
	  The solid lines show the total average number of recovered blocks. The signals consist of $N=30$ blocks with $s=4$ blocks having non-vanishing real entries.
	\label{fig:samplingComplexityGauss} }
\vspace{-.5cm}
\end{figure}

\paragraph{Gaussian measurements}

We consider $(s=4,\sigma=20)$-sparse signals consisting of $N=30$ blocks of dimension $n=100$. For each instance of the signal the 
supports are randomly drawn from a uniform distribution and the entries are i.i.d.\ real numbers  from a standard normal distribution. 
We subsequently run {\HTP}, {\HiHTP} and {\HiLasso} on 
noiseless Gaussian measurements of each signal.  For different numbers of measurements, we count the number of successfully recovered signals out of $100$ runs.  A signal is successfully recovered if it deviates by less than $10^{-5}$ in $\ell_2$-norm from the original signal. We observed that deviations are typically either significantly smaller or significantly larger than $10^{-5}$.

While it is straightforward to inform the {\HTP} and {\HiHTP} algorithm about the sparsity of the signal, the {\HiLasso} is calibrated by adjusting the weights $\mu$ and $\lambda$ in front of the regularizer terms in \eqref{eq:HiLassoProblem}. We have found that finding appropriate values for both weights requires extensive effort, especially in the presence of additive noise. In applications where the sparsity levels are approximately known hard-thresholding algorithms do not require additional calibration.
For noiseless Gaussian measurements, we have found numerically that {\HiLasso} yields good recovery rates for $\mu=0.4$ and $\lambda=0.5$ in our setting, see Appendix~\ref{app:calibration}. Therefore these parameters are used for {\HiLasso} in the tests. The results of all three algorithm are shown in Fig.~\ref{fig:samplingComplexityGauss} on the left.

It is often not required to reconstruct the entire signal in an application. Instead, a relevant measure of performance is the number of successfully recovered blocks.  In the following, a block is  successfully recovered if it deviates by less than $10^{-5}$ in Euclidean norm from the corresponding block of the original signal. For each number of measurements we average the number of recovered blocks over $100$ runs. 

Fig.~\ref{fig:samplingComplexityGauss} shows the resulting recovery rates on the right. While for {\HTP} the number of recovered blocks quickly decays 
for small numbers of samples, {\HiHTP} performs significantly better in this regime. Note that the minimal number of not recovered blocks is lower bounded by $2s$, which follows directly from the definition of  \HiHTP.  
Furthermore, the {\HiHTP} recovers the content of the active blocks accurately using less measurements then \HTP. The {\HiLasso} algorithm shows a similar behaviour as the {\HTP} algorithm but requires even more samples to achieve comparable recovery rates. 

\begin{figure}[tb]
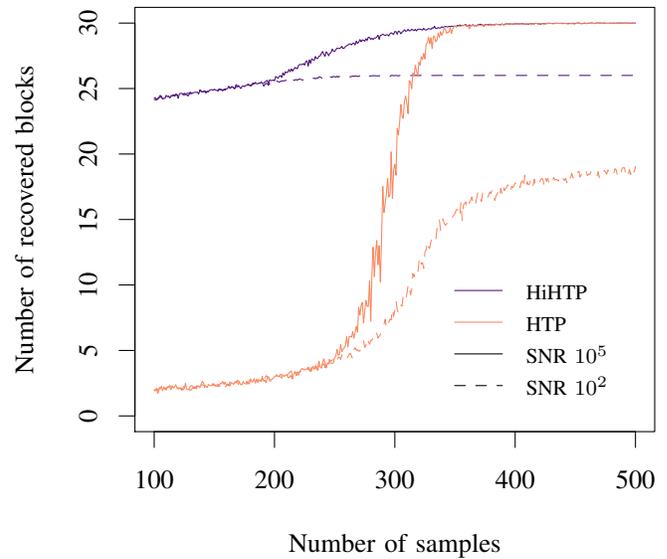

	\centering
	\input{plotGaussianNoisy}
	\hspace{-.2cm}
	\input{plotGaussianNoisyError}
	\vspace{-.6cm}
	\caption{Left: Number of recovered blocks over the number of measurements $m$ for {\htp} and {\HiHTP} in the presence of additive Gaussian noise. The solid line and dashed lines indicate the number of recovered blocks for an $\text{SNR}$ of $10^{5}$ and $10^{2}$, respectively. The signals consist of $N=30$ blocks with $s=4$ blocks having non-vanishing real entries. Right: The mean distance of the reconstructed blocks from the original blocks of the signal  in Euclidean norm over the number of measurements $m$ for {\htp} and {\HiHTP} in the presence of additive Gaussian noise. The solid line and dashed lines indicate the number of recovered blocks for an $\text{SNR}$ of $10^{5}$ and $10^{2}$, respectively. The signal model has dimensions $N=30$, $n=100$, $s=4$ and $\sigma=20$.
	\label{fig:samplingComplexityNoisyGauss}  }
	\vspace{-.5cm}
\end{figure}

The better performance of {\HiHTP} compared to {\HTP} can also be observed if Gaussian noise is added to the measurement vector.  
The block recovery rates for different numbers of samples and signal-to-noise ratios ($\text{SNR}$) of $10^{5}$ and $10^{2}$ are shown in Fig.~\ref{fig:samplingComplexityNoisyGauss} on the left. 
We require a successfully recovered block to deviate by less than $10^{-2}$ in Euclidean norm from the corresponding block of the original signal. 
For $\text{SNR} = 10^{2}$ the noise prohibits the accurate recovery of the active blocks for both algorithms. 
But for {\HiHTP} the correct identification of the zero blocks can still be achieved. 
On the right, Fig.~\ref{fig:samplingComplexityNoisyGauss} shows the mean distance in Euclidean norm of the reconstructed blocks from the original blocks of the signal.

Table~\ref{tab:runtimes} displays the run times per reconstruction of the {\HTP} and {\HiHTP} algorithm on desktop hardware (2 x 2,4 GHz Quad-Core Intel Xeon, 64 GB 1066 MHz DDR3 ECC) in the tests with noiseless Gaussian measurements. The run times are average values of $100$ trials. Both {\HTP} and {\HiHTP} show comparable run times. 

\begin{table}[tb]
	\centering
	\caption{Run times per reconstruction in seconds of HTP and HiHTP on desktop hardware for different numbers of measurements $m$. \label{tab:runtimes}}
	\begin{tabular}{l r r r} \toprule
		$m$ & $200$ & $300$ & $400$ \\ \midrule
		{\HTP}  & $0.29$s & $0.35$s & $0.42$s \\
		{\HiHTP}  & $0.34$s & $0.37$s & $0.40$s \\ \bottomrule
	\end{tabular}
	
\end{table}

\paragraph{Fourier measurements}

In the application to CCRA one aims at recovering $(s,\sigma)$-sparse complex vectors from partial Fourier measurements~\eqref{eq:CCRAMeasurement}.  Fig.~\ref{fig:samplingComplexityFourier} shows the number of successfully recovered blocks from uniform randomly selected Fourier coefficients for {\HiHTP} and {\HTP}.

 An $(s=3,\sigma=10)$-sparse support is drawn uniformly of $N=20$ blocks of dimension $n=50$. The signal entries are complex numbers with real and imaginary part i.i.d.\ sampled from a standard normal distribution. Recovery rates are again averaged over $100$ runs.  Running {\HiHTP} on random partial Fourier measurements shows a qualitatively similar behaviour as for Gaussian measurements.  

 When we select only the $m$ lowest Fourier coefficients instead of a uniformly sampled subset, we observe that the block support is still recovered from a small amount of samples. In contrast, a correct reconstruction of the content of the active blocks requires a comparatively large amount of samples. Both algorithm {\HiHTP} and {\HTP} perform approximately the same using the lowest Fourier modes (not shown in the figures). 

\begin{figure}[tb]
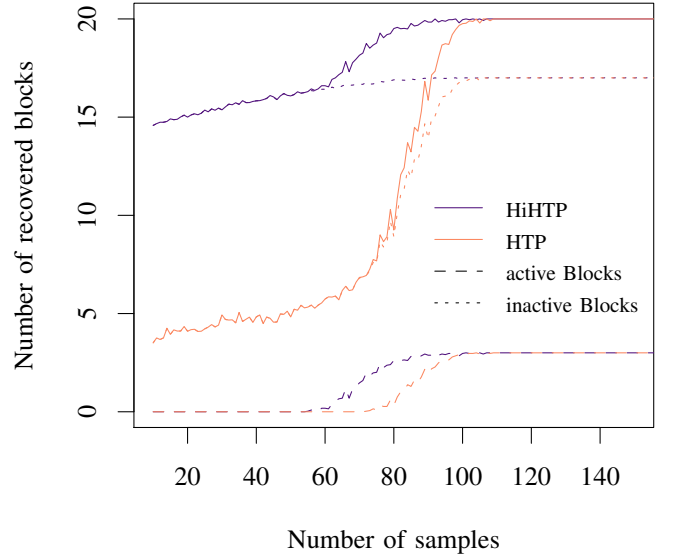

    \centering
    \include{plotFourier}
    \vspace{-.6cm}
    \caption{Number of recovered blocks over the number of measurements $m$ for {\htp} and {\HiHTP} employing uniformly random partial Fourier measurements.  The dashed and dotted lines indicate the average number of correctly recovered zero and non-zero blocks, respectively. The solid lines show the total average  number of recovered blocks. 
    The signals consist of $N=20$ blocks with $s=3$ blocks having non-vanishing real entries. \label{fig:samplingComplexityFourier}}
    \vspace{-.5cm}
\end{figure}

This observation is in agreement with the intuition that the lowest Fourier modes encode information over the large scale structure of the signal.  The information whether a block is active or not can be regarded as a property which does not require to resolve the scale of every entry but only larger blocks of the signal. On the contrary, information on the entries inside a block are encoded in the higher Fourier modes.


\section{General hierarchically sparse framework}\label{sec:generalHierachicalSparsity}
In this section, we provide the general definition of hierarchical sparse vectors that can be efficiently constructed following the strategy of \HiHTP. The sparsity structure can have the general hierarchy of an arbitrary rooted tree, possibly with different block sizes and corresponding sparsities.  We sketch the corresponding algorithm and give a general version 
 of the recovery results of Section~\ref{sec:analytic}.  

\subsection{Setting and notation}

 We consider a rooted tree $T = (V,E)$ with vertex set $V$ and edges $E$. Fig.~\ref{fig:HiSparse} illustrates the following definitions.  We denote the root element by $v_0 \in V$. 
Since the root implies an ordering of the vertices away from the root we get the common notions of the \emph{parent}, \emph{children} and \emph{siblings} of the tree. 
We denote the parent node of $v$ by $\parent(v)\in V$ and set of its children by $\children(v) \subset V$. 
Furthermore, the subset of $V$ that has no children are called $\leaves(T)$. 
We denote the number of leaves of $T$ by $d\coloneqq |\leaves(T)|$.

We can choose an identification $i: \leaves(T) \to [d]$ of the leaves of $T$ with the entries of a vector $\x \in \KK^d$.
Thereby, $\x$ gets partitioned into  hierarchically nested blocks. 

One possible way of identifying vector entries with the leaves of $T$ is given by introducing an ordering among the siblings at every vertex.
We can canonically extend this ordering to all vertices of the same depth by requiring that two vertices with different parent vertices inherit the ordering of their parents. 
Enumerating the leaves of $T$ according to the ordering yields an identification with $[d]$. 
In this way, the groups of siblings in $T$ define a hierarchy of nested blocks of the vector $\x \in \KK^d$. 
The entries of the vector are grouped into blocks as the leaves are grouped into siblings of a parent by $T$. These blocks are again grouped into larger blocks specified by the 
ancestry
of the parents of the leaves and so on. 
See Fig.~\ref{fig:HiSparse} for an illustration. Hereinafter, we always assume that such an identification was fixed.
Nonetheless, other identifications of the leaves with vector entries are also admissible. However, they generally do not give rise to an apparent block structure of the vector $\x$ but rather a permutation thereof. 

Via the identification, the support $\Omega$ of a vector that admits a $T$-structure can be viewed as a subset of the leaves, $\Omega \subset \leaves(T)$. 
We will call $v \in \Omega$ an \emph{active vertex} and further call every ancestor node of an active vertex active as well.
Given a support $\Omega \subset \leaves(T)$ it thereby induces a map $\tree{\Omega}: V \to \mathcal{P}(V)$ such that 
\begin{equation}
	\tree{\Omega}(v) \coloneqq \{ w \in \children(v) \mid \text{$w$ is active}\}.
\end{equation}
We refer to $\tree\Omega$ as the hierarchical support of $\x$, as the active vertices of $V$ identified by $\tree\Omega$ correspond to the hierarchically nested blocks containing non-vanishing entries.

Now we would like to allow only a certain number of vertices to be active among siblings for each vertex, corresponding to a restriction of the number of blocks with non-vanishing entries. 
To this end, we define the map $\tree{n}: V \to \NN$ to count the number of children of a given vertex, {i.e.} $\tree{n}(v) \coloneqq |\children(v)|$.
Furthermore, we define a \emph{sparsity} on $T$ as a map $\tree s: V \to \NN$ with $\tree s(v) \leq \tree n(v)$ for all $v\in V$. 
A support $\Omega \subset \leaves(T)$ or the corresponding hierarchical support  $\tree{\Omega}$ on the entire tree is called $\tree s$-sparse if $|\tree{\Omega(v)}| \leq \tree s(v)$ for all $v \in V$. 

In the following it will be convenient to use recursive definitions. To this end, we introduce for $v \in V$ the notation 
$T_v = (V_v, E_v)$ to refer to the maximal subtree of $T$ with root $v$.

 \begin{figure}[tb]
 	\centering
 	\input{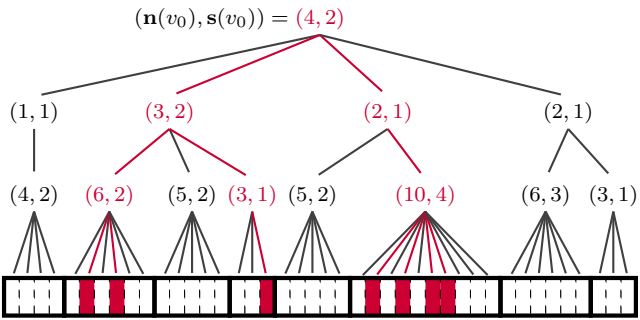}
 	\caption{In this figure, an example of a hierarchical sparse vector is given. The grouping of the entries is encoded in a rooted tree. The children of a vertex constitute a block at their level. The pair of values $(\tree n(v), \tree s(v))$ is indicated at each vertex. The leaves of the tree are identified with the entries of the vector. The support of the vector and  active vertices are highlighted in red.}
 	\label{fig:HiSparse}
 \end{figure}

\subsection{Algorithm and recovery guarantees}

To recover an $\tree s$-sparse vector $\x$ with block structure given by a tree $T$ from linear measurements we can employ the strategy of {\HiHTP}, Algorithm~\ref{alg:HiHTP}, and generalise the $(s,\sigma)$-sparse thresholding operator $L_{s,\sigma}$ to the 
thresholding operator $L^T_{\tree s}(\vec z)$ for
$\tree s$-sparse vectors. 

By the same argument as for two sparsity levels, {i.e.} by the principle of optimal substructures, $L^T_{\tree{s}}(\vec z)$ is the union of the $\tree{s}(v_0)$ many supports $\Omega_w$ of the optimal $\tree{s}|_{V_w}$-sparse approximation to $\vec z$, where $w$ goes through the $\children(v_0)$, which have the highest value of $\norm{{\vec z}_{\Omega_w}}$. Calculating the optimal $\tree{s}|_{V_w}$-sparse approximation of $z$ is naturally done by invoking $L_{T_w, \tree{s}|_{V_w},i}(\vec z)$ for the maximal subtree $T_v$ of $T$ with root $v$. 
This recursive structure is summarised as Algorithm~\ref{alg:LS}.
In Step~\ref{alg:LS:recursion} the Algorithm~\ref{alg:LS} is invoked on each of the subtrees. The actual thresholding step is performed using the standard $s$-sparse thresholding operator $L_s$ in 
Step~\ref{alg:LS:blockthresholding} on the weight vector $\vec \alpha$ that contains one entry per child of $v'_0$. 
The latter operates with a computational complexity $\mathcal{O}(\tree n(v_0'))$. Since this operation needs to be applied in each vertex of $T$, Algorithm~\ref{alg:LS} operates at a computational complexity of $\mathcal{O}(\abs{V})$. 
This is essentially the same as $\mathcal{O}(d)$, i.e. the complexity of applying the standard thresholding operator, if we do not allowed for many vertices in the tree that only have one child. Since the thresholding operation is trivial on only child vertices, this can be assumed without loss of generality. 
Thus, the computation time per iteration of the more general {\HiHTP}-algorithm typically scales  identically to the original {\HTP} Algorithm.

\begin{algorithm}[tb]      
	\caption{Calculation of $L^{T'}_{\tree{s}'}(\vec z)$} 
	\label{alg:LS}
	\begin{algorithmic} [1]
 		\REQUIRE vector $\z$, hierarchical structure $T'$, sparsity $\tree{s}'$
 		\STATE Let $v'_0$ be the root of $T'$ 
 		\FOR {$w \in \children(v'_0)$:}
 			\IF {$w \in \leaves(T')$:}
 				\STATE $\Omega_w = \set{w}$
 				\STATE $a_w = \left|z_{i(w)}\right|$
 			\ELSE
 				\STATE $\Omega_w = L^{T'_w}_{\tree s|_{V'_w}}(\vec z)$ \label{alg:LS:recursion}
 				\STATE $a_w = \|\vec z|_{\Omega_w}\|$
 			\ENDIF
 		\ENDFOR
 		\STATE $\vec a = (a_w \mid w \in \children(v'_0))$
 		\STATE $\Omega^B = L_{\vec s(v'_0)}(\vec a) \subset \children(v'_0)$ \label{alg:LS:blockthresholding}
 		\STATE $\Omega = \bigcup_{w \in \Omega^B} \Omega_w$
 		\ENSURE $\tree s$-sparse support $\Omega$.
	\end{algorithmic}
\end{algorithm}

In complete analogy to $(s,\sigma)$-sparse {\HiHTP}, we can provide the following recovery guarantee for the general $\tree s$-sparse case.

\begin{theorem}[Recovery guarantee]\label{thm:generalRecGarant}
	Suppose that the following RIP condition holds
	\begin{equation}\label{eq:generalRecGarant:RIP}
		\delta_{3\tree s} < \frac{1}{\sqrt{3}}.
	\end{equation}
	Given $\x \in \CC^{d}$, $\e \in \CC^m$, and an $\tree s$-sparse support $\Omega$  on a tree $T$, i.e., the set of indices of the active leaves of $T$. Then, the sequence $(\x^k)$ defined by {\HiHTP} (Algorithm~\ref{alg:HiHTP}) with thresholding operator $L^T_{\tree s}$ and $\y = \A\x\proj_\Omega + \ev$ satisfies, for any $k\geq 0$, 
	\begin{equation}
		\norm{\x^k - \x\proj_\Omega} \leq \rho^k \norm{\x^0 - \x\proj_\Omega} + \tau \norm{\ev},
	\end{equation}
	where 
	\begin{equation} 
		\rho =  \left({\frac{2 \delta_{3\tree s}^2}{1-\delta_{2\tree s}^2}} \right)^{1/2}< 1
	\end{equation}
	and  	$\tau \leq 5.15 /(1-\rho)$.
\end{theorem}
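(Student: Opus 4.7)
The plan is to repeat, essentially verbatim, the proof of Theorem~\ref{thm:recGarant}, which itself follows the \HTP-convergence argument of Foucart \cite{Foucart:2011}. To do so I need three structural facts about hierarchical sparsity that together play the role Observation~\ref{observation:supportUnions} and the monotonicity \eqref{eq:delta-hierarchy} played in the $(s,\sigma)$-sparse setting.

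First, I would verify that the operator $L_{\tree s}$ described in Section~\ref{sec:generalHierachicalSparsity} really returns the support of a best $\tree s$-sparse approximation of its argument in Euclidean norm. The argument is inductive over the levels of the tree $T$: for any fixed choice of active vertices at the level of a parent $v$, the minimal $\ell_2$-error over $\tree s(v)$-sparse restrictions of a sibling block is attained by keeping the $\tree s(v)$ entries of largest magnitude, and the squared error above a vertex $v$ decomposes as a sum over its children of the squared errors within the corresponding subtrees. Hence the greedy bottom-up rule---threshold each sibling-block, pass the residual $\ell_2$-norm upwards, repeat---produces a globally optimal $\tree s$-sparse support. Second, I would record the trivial monotonicity $\delta_{\tree s} \leq \delta_{\tree s'}$ whenever $\tree s \leq \tree s'$ pointwise, which follows because every $\tree s$-sparse vector is $\tree s'$-sparse. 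Third, I would prove the support-union bound: if $\tree\Omega_1$ is $\tree s_1$-sparse and $\tree\Omega_2$ is $\tree s_2$-sparse, then the vertex-wise union is $(\tree s_1 + \tree s_2)$-sparse, and consequently
\begin{equation}
  \normb{\Id - (\A_{\Omega_1 \cup \Omega_2})^\ast \A_{\Omega_1 \cup \Omega_2}}
  \leq \delta_{\tree s_1 + \tree s_2},
\end{equation}
generalizing Observation~\ref{observation:supportUnions}.

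Armed with these three facts, the proof of Theorem~\ref{thm:recGarant} transfers line by line. At each iteration one has three relevant $\tree s$-sparse supports---the true support $\tree\Omega$, the previous iterate's support $\tree\Omega^k$, and the next one $\tree\Omega^{k+1}$---and the Foucart argument estimates $\norm{\x^{k+1} - \x\proj_\Omega}$ using (a) the least-squares step on $\Omega^{k+1}$, whose error is controlled through the restricted isometry on $\Omega \cup \Omega^{k+1}$ (a $2\tree s$-sparse union, giving $\delta_{2\tree s}$), and (b) the optimality of $L_{\tree s}$ from step (i), which is applied to the proxy $\x^k + \A^\ast(\y - \A\x^k)$ and leads to inner-product inequalities on $\Omega \cup \Omega^k \cup \Omega^{k+1}$ (a $3\tree s$-sparse union, giving $\delta_{3\tree s}$). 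Substituting these constants into the recursion produces the contraction rate $\rho = (2\delta_{3\tree s}^2/(1-\delta_{2\tree s}^2))^{1/2}$ and the noise factor $\tau \leq 5.15/(1-\rho)$ exactly as in the $(s,\sigma)$-case.

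I expect the main technical hurdle to lie in step (i), the optimality of $L_{\tree s}$. The recursion is intuitive but depends crucially on the fact that the constraint ``$\tree \Omega$ is $\tree s$-sparse'' decouples across sibling-blocks once the active children of each internal vertex have been fixed, and on the orthogonality of the contributions of disjoint subtrees to the $\ell_2$-error. Once this greedy-optimality is made rigorous, the union-sparsity and monotonicity statements are essentially bookkeeping, and the remainder of the argument is a direct transcription of the proof of Theorem~\ref{thm:recGarant} with $(s,\sigma)$ replaced by $\tree s$ and $(3s, 2\sigma), (2s,2\sigma)$ replaced by $3\tree s, 2\tree s$.
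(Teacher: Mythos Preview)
Your proposal is correct and follows essentially the same approach as the paper. The paper's own treatment of Theorem~\ref{thm:generalRecGarant} is a one-sentence sketch: ``The proof of Theorem~\ref{thm:generalRecGarant} can be close to exactly copied from Theorem~\ref{thm:recGarant}. The only required modification is a more general formulation of the Observation~\ref{observation:supportUnions}.'' Your three structural ingredients---optimality of $L_{\tree s}$, pointwise monotonicity of the hierarchical RIP constants, and the support-union bound $\delta_{\tree s_1 + \tree s_2}$---are precisely what that generalization requires, and the paper asserts the first of these without proof in Section~\ref{sec:generalHierachicalSparsity} (``By construction the remaining hierarchical support is the support of the best $\tree s$-sparse approximation in $\ell_2$-norm''), so your flagging it as the main point needing a genuine argument is well placed.
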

As before we make use of the natural generalization of RIP to the sparsity structure at hand to define ${\tree s}$-RIP constants $\delta_{\tree s}$ bounding $\norm{\A \x}^2$ for all $\tree s$-sparse $\x \in \RR^d$. In addition, for $q\in\NN$ multiplication of $\tree s$ is defined pointwise as $q \tree s(v) \coloneqq \max\{q \tree s(v), \tree n(v)\}$ for all $v \in V$. 

The proof of Theorem~\ref{thm:generalRecGarant} is given together with the proof of Theorem~\ref{thm:recGarant} in Appendix~\ref{app:proofs}. The only required modification is a more general formulation of the Observation~\ref{observation:supportUnions}.

Given an arbitrary $\tree s$-sparse hierarchical support, the result of Ref.\ \cite{BluDav09} allows to provide a bound on $\delta_{\tree s}$ for measurement matrices with {i.i.d.} real Gaussian entries. To this end, the number $L=L(v_0)$ of $\tree s$-sparse hierarchical supports on $T$ can be recursively calculated as using
\begin{equation}
	L(v) = \sum_{\substack{W \subset \children(v),\\ |W|=s(v)}} \prod_{w\in W} L(w) 
\end{equation}
and $L(w) = 1$ for all $w \in \leaves(T)$.

Since the resulting expression is not concise, we illustrate the generalized RIP bound with another important special case. Consider an $\tree s(v)$-sparse setting where all vertices of the depth $i$ share a common number of children $n_i = \tree n(v)$ and sparsity $s_i = \tree s(v)$. The number of such supports of depth $l$ is given by 
\begin{equation}
	L = \prod_{i = 0}^l \binom{n_i}{s_i}^{\prod_{j=-1}^{i-1}s_j}
\end{equation}
with the convention that $s_{-1} \coloneqq 1$. With \eqref{eq:bludav:samplingcomplexity} we therefore get the following generalized version of Theorem~\ref{thm:gaussianBscRIP}: 
\begin{theorem}[$\tree s$-sparse RIP for real Gaussian measurements]\label{thm:generalGaussianBscRIP}
	Let $\A$ be an $m \times n_0\cdot n_1 \cdots n_l$ real matrix with {i.i.d.} Gaussian entries and $m < \prod_{i=0}^l n_i$
	.
	For $\epsilon, \delta > 0$, assume that 
	\begin{equation}
		m \geq \frac{2}{c \delta^2} \left( \sum_{i =0}^l   \prod_{j=-1}^{i-1}s_j \ln \left( \frac{\e\, n_i}{s_i}\right) + \prod_i s_i\ln\left(\frac{12}{\delta}\right) + \ln(\epsilon^{-1}) \right),
	\end{equation}
	with $s_{-1} \coloneqq 1$ and $c>0$ a universal constant.
	Then, with probability of at least $1 - \epsilon$, the restricted isometry constant $\delta_{\tree s}$ of $\A/\sqrt{m}$ satisfies
	\begin{align}
		\delta_{\tree s} \leq \delta. 
	\end{align}
\end{theorem}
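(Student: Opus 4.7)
The plan is to reduce Theorem~\ref{thm:generalGaussianBscRIP} to an application of the Blumensath--Davies bound \eqref{eq:bludav:samplingcomplexity} on the RIP constants of Gaussian matrices with respect to a union of subspaces, in direct analogy to how Theorem~\ref{thm:gaussianBscRIP} was proved. The only new ingredient is the count of the number $L$ of admissible $\tree s$-sparse hierarchical supports on the uniform-branching tree $T$; once $L$ is in hand, the theorem follows by substituting $\ln(2L)$ into \eqref{eq:bludav:samplingcomplexity} and bounding binomial coefficients.

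The counting is done top-down. The root $v_0$ has $n_0$ children, of which a subset of size $s_0$ may be activated, contributing a factor $\binom{n_0}{s_0}$. Conditioned on the active vertices at depth $i-1$, of which there are exactly $s_{i-1}$ under the uniform-branching assumption (and exactly one at the root, which motivates the convention $s_{-1} := 1$), each active parent independently selects $s_i$ of its $n_i$ children to activate, contributing a factor $\binom{n_i}{s_i}^{s_{i-1}}$. The conditions defining a hierarchical support are automatically preserved: active vertices are only placed beneath active parents (so condition~(ii) holds), and as long as $s_i \geq 1$ every active intermediate vertex is guaranteed an active descendant at every lower level (so condition~(iii) holds). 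Multiplying across all levels yields
\begin{equation*}
  L = \prod_{i=0}^{l} \binom{n_i}{s_i}^{s_{i-1}}.
\end{equation*}

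Applying the standard bound $\binom{n_i}{s_i} \leq (e\, n_i/s_i)^{s_i}$ to each factor then gives
\begin{equation*}
  \ln L \;\leq\; \sum_{i=0}^{l} s_{i-1}\, s_i \,\ln\!\left(\frac{e\, n_i}{s_i}\right).
\end{equation*}
Substituting this into the Blumensath--Davies sampling complexity \eqref{eq:bludav:samplingcomplexity}, and handling the subspace-dimension contribution $k\ln(12/\delta)$ by absorbing it into the $\ln(12/\delta)$ term exactly as in the proof of Theorem~\ref{thm:gaussianBscRIP}, produces the stated bound on $m$.

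There is no genuine analytic obstacle, since the probabilistic content is already encapsulated in \eqref{eq:bludav:samplingcomplexity}. The main thing to get right is the combinatorial recursion, in particular the boundary case at the root (handled by $s_{-1} = 1$) and the verification that the naive product enumerates precisely the $\tree s$-sparse hierarchical supports without overcounting or missing configurations; this boils down to the observation that, under uniform branching, the number of active parents at level $i$ is a deterministic function $s_{i-1}$ of the level alone, so the per-level choices factorise cleanly.
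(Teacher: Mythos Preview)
Your plan is precisely the paper's: the text immediately preceding Theorem~\ref{thm:generalGaussianBscRIP} states the formula $L = \prod_{i=0}^{l} \binom{n_i}{s_i}^{s_{i-1}}$ for the number of maximal $\tree s$-sparse supports and then invokes \eqref{eq:bludav:samplingcomplexity}, exactly as you propose.

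That said, the justification you give for the exponent $s_{i-1}$ does not hold up once the tree has more than two levels. You assert that the active vertices at the parent level number exactly $s_{i-1}$, but under the uniform-branching assumption every active vertex at depth $j$ activates $s_j$ of its children, so the number of active vertices at depth $i$ is the product $\prod_{j=0}^{i-1} s_j$, not $s_{i-1}$. These coincide for $i \leq 1$ (so the two-level $(s,\sigma)$ case of Theorem~\ref{thm:gaussianBscRIP} is unaffected), but already at depth $2$ there are $s_0 s_1$ active vertices, not $s_1$. The count your top-down recursion actually produces is $L = \prod_{i} \binom{n_i}{s_i}^{\prod_{j<i} s_j}$, and the corresponding sum in the sampling bound carries $\bigl(\prod_{j<i} s_j\bigr)\, s_i$ rather than $s_{i-1} s_i$. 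The paper's displayed formula uses the same exponent $s_{i-1}$ that you do, so you have faithfully reproduced its derivation; but the argument you spell out does not in fact establish that exponent beyond two levels.
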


More general $\tree s(v)$-sparse settings can be evaluated following the same strategy. However, the resulting expression for the number of samples $m$ are in general more involved. 
Also the result of Theorem~\ref{thm:hirip} for Kronecker product measurements can be readily generalized to hierarchical sparsity with more than two levels: 
\begin{theorem} \label{thm:HiRIPMoreLevels}
  Let $\vec A \in \mathbb{K}^{m\times n_1}$ be a matrix with  RIP-constant $\delta_{s_1}(\vec{A})$ and $\vec B \in \KK^{M\times N}$ one with HiRIP constant $\delta_{s_2, \dots, s_L}(\vec B)$. Then the hierarchical RIP-constant $\delta_{s_1, \dots, s_n}$ of $\vec A \otimes \vec B \in \KK^{mM\times nN}$ satisfies
  \begin{align*}
    \delta_{s_1, \dots , s_L}\leq  \delta_{s_1}(\vec{A}) + \delta_{s_2, \dots, s_L}(\vec{ B}) + \delta_{s_1}(\vec{A})  \cdot \delta_{s_2, \dots, s_L}(\vec{ B}).
  \end{align*}
  In particular, through induction, we obtain for matrices $\vec A_i \in \KK^{m_i\times n_i}$, $i=1, \dots, L$, with $s_i$-th RIP constants $\delta_{s_i}^{\vec A_i}$:
  \begin{align*}
    \delta_{s_1, \dots, s_L}(\vec A_1 \otimes \dots \otimes \vec A_L) \leq \prod_{i=1}^L (1+ \delta_{s_i}({\vec A_i})) -1.
  \end{align*}
\end{theorem}

The techniques of the proof of Theorem \ref{thm:hirip} can readily be adapted to prove also Theorem \ref{thm:HiRIPMoreLevels}. 

\begin{proof}[Proof of Theorem~\ref{thm:HiRIPMoreLevels}]
The proofs reads exactly as the proof of Theorem~\ref{thm:hirip} up to equation \eqref{eq:crucialStep1}, with an adapted version of the flipping operator. 
To be precise, the latter is defined by linearly extending the mapping
\begin{align*}
	\vec{e}_{i_1} \mapsto (\vec{e}_{i_2} \otimes \dots \otimes \vec{e}_{i_L}) =   (\vec{e}_{i_2} \otimes \dots \otimes \vec{e}_{i_L}) \otimes \vec{e}_{i_1}
\end{align*} 
rather than $\vec{e}_i \otimes \vec{e}_j \to \vec{e}_j \otimes \vec{e}_i$. It maps the $(s_1, \dots, s_L)$-sparse vector $\vec{x}$ into a vector of $s_1$-sparse blocks $\vec{h}_i$, for which the estimate \eqref{eq:crucialStep1} can be applied. Flipping back, we obtain a vector of blocks $\x_i$  that are not only $\sigma$-sparse, but rather $(s_2, \dots, s_L)$-sparse. We may hence apply \eqref{eq:crucialStep} with the corresponding RIP of $\B$. 
The statement now follows. 
\end{proof}

\section{Conclusion}

	In this paper, efficient recovery of $(s,\sigma)$-sparse vectors using a variant of {\HTP}, namely the {\HiHTP} algorithm, was formulated. The {\HiHTP} algorithm analytically and numerically proves itself more succesful in the recovery in terms of the sampling complexity compared to established compressed sensing methods. At the same time, 
	it is computationally not more expensive than the original {\HTP} algorithm. 
	The same strategy can be applied to the larger class of hierarchical sparse signals with multiple nested levels of groupings and different sparsities associated to each group. We have also demonstrated the relevance of considering such sparsity structures in several recent applications.

	It is straight-forward to include a collaborative, jointly-sparse extension of the hierarchical sparse structure in the algorithm, where different copies or blocks of a signal have the same sparsity pattern.  To this end, the thresholding operator has to be evaluated on the $\ell_2$-norm of the entries that share the same sparsity structure. In this way, the slightly more general setting originally investigated in Refs.\ \cite{SprechmannEtAl:2010, SprechmannEtAl:2011} can be incorporated in the {\HiHTP} algorithm.

	In this work, bounds on generalized RIP constants for hierarchical sparse structures were derived for Gaussian measurements. It is an interesting open question to find similar results for other measurement ensembles such as partial Fourier measurements. Especially, for an information theoretic analysis of the CCRA framework using {\HiHTP} for the channel estimation task such results are of essential importance. 

	Furthermore, we have shown that a Kronecker product $\vec A \otimes \vec B$ has the HiRIP property
	as soon as its components exhibit the RIP. The analogous result holds for Kronecker products with multiple factors and multi-level hierarchically sparse vectors. This is in contrast to the standard $s$-RIP, where each component needs to have the $s$-RIP. 
	The framework can be directly applied to derive recovery guarantees for a variety of applications, e.g. in massive machine-type communications,  massive MIMO settings, and more generally sparse matrix sketching. 

	It is also an interesting further direction to identify recovery guarantees in the non-commutative analogous setting. 
	The closest such setting is given by \emph{tree tensor networks} \cite{ShiDuanVidal:2006} from quantum mechanics, which are also referred to by the name \emph{hierarchical Tucker} tensor format \cite{HackbuschKuehn:2009, Grasedyck:2010}. 
	Recently, a tensor version of the Iterative Hard Thresholding algorithm \cite{BlumensathDavies:2008} has been put forward covering this non-commutative analogue and already including partial recovery guarantees \cite{RauSchSto16}.

\appendices 

\section{Proof of Theorem~\ref{thm:recGarant} and Theorem~\ref{thm:generalRecGarant}}
\label{app:proofs}

For the proof of Theorem~\ref{thm:recGarant} and Theorem~\ref{thm:generalRecGarant} we need a slightly more general formulation of some standard results about RIP constants to cover the case of the here introduced RIP constants for hierarchically sparse vectors. 
We begin with the following simple statement, which is related to the common equivalent formulations of RIP constants \cite[Chapter~6.1]{FouRau13}. 

\begin{proposition}[Equivalent characterizations of RIP] \label{prop:characterisationsRIP}
	Let $\Omega \subset [d]$ be a support and $\A\in \KK^{m\times d}$ be a measurement map. 
	Then the following two statements are equivalent:
	\begin{enumerate}
		\item $\delta \geq \norm{\Id - \A^\ast_\Omega \A_\Omega}$,
		\item $\forall \x \in \KK^{d}$ with $\supp(\x) \subset \Omega$ 
		\begin{equation}\label{eq:Omega-RIP}
			(1-\delta) \norm{\x}^2 \leq \norm{\A \x}^2 \leq (1+\delta) \norm{\x}^2 .
		\end{equation}
	\end{enumerate}
\end{proposition}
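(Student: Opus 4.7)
The plan is to translate both statements into assertions about the self-adjoint operator $M \coloneqq \Id - \A_\Omega^\ast \A_\Omega$ acting on $\KK^{|\Omega|}$, and to exploit the variational characterization of the operator norm for self-adjoint matrices.

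First I would note that for any $\x \in \KK^d$ with $\supp(\x) \subset \Omega$, we can identify $\x$ with its restriction $\x_\Omega \in \KK^{|\Omega|}$ in the sense that $\norm{\x} = \norm{\x_\Omega}$ and, by definition of the submatrix $\A_\Omega$, also $\A\x = \A_\Omega \x_\Omega$. Therefore
\begin{equation}
\norm{\A\x}^2 - \norm{\x}^2 = \braket{\x_\Omega}{(\A_\Omega^\ast \A_\Omega - \Id)\x_\Omega} = -\braket{\x_\Omega}{M\x_\Omega},
\end{equation}
so the double inequality in statement~(2) is equivalent to
\begin{equation}
|\braket{\x_\Omega}{M\x_\Omega}| \leq \delta \norm{\x_\Omega}^2 \quad \text{for all } \x_\Omega \in \KK^{|\Omega|}.
\end{equation}

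Next I would invoke the standard fact that for a self-adjoint matrix $M$ acting on a finite dimensional inner product space, the operator norm admits the variational formula $\norm{M} = \sup_{\vec v \neq 0} |\braket{\vec v}{M\vec v}|/\norm{\vec v}^2$. Since $M = \Id - \A_\Omega^\ast \A_\Omega$ is self-adjoint, this immediately yields the equivalence between $\norm{M} \leq \delta$ (statement~1) and the bound $|\braket{\vec v}{M \vec v}| \leq \delta \norm{\vec v}^2$ for all $\vec v$, which, by the previous paragraph, is exactly statement~(2).

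This is essentially a standard observation, so I do not expect any real obstacle; the only thing to be a bit careful with is the identification $\x \leftrightarrow \x_\Omega$, in particular that as $\x$ ranges over all vectors with $\supp(\x) \subset \Omega$, its restriction $\x_\Omega$ ranges over all of $\KK^{|\Omega|}$, so that the supremum in the variational formula is actually attained within the set considered in statement~(2).
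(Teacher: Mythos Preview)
Your proof is correct and follows essentially the same approach as the paper: both arguments reduce the equivalence to the variational characterization of the operator norm of the self-adjoint matrix $\Id - \A_\Omega^\ast \A_\Omega$ via its Rayleigh quotient, and then identify the resulting quadratic-form bound with the double inequality~\eqref{eq:Omega-RIP}. Your treatment of the identification $\x \leftrightarrow \x_\Omega$ is in fact slightly more careful than the paper's, which mixes $\x \in \KK^d$ and $\x \in \KK^{|\Omega|}$ somewhat loosely in its displayed chain of equalities.
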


\begin{proof}
	The inequality
	\begin{align}
	\delta \geq \norm{\Id - \A^\ast_\Omega \A_\Omega} 
	&=
	\max_{\x \in \KK^{d}} \frac{|\braket \x {\x - \A^\ast_\Omega \A_\Omega(\x)}|}{\braket \x \x}
	\\
	&=
	\max_{\supp(\x)=\Omega} \frac{|\braket\x\x - \braket{\A\x}{\A\x}|}{\braket{\x}{\x}}
	\end{align}
	holds if and only if for all $\x \in \KK^{d}$ with $\supp(\x) = \Omega$
	\begin{equation}
		\delta \braket \x\x \geq |\braket\x\x - \braket{\A\x}{\A\x}|.
	\end{equation}
	The last bound is equivalent with \eqref{eq:Omega-RIP}.
\end{proof}

If a matrix satisfies RIP, then one can put a similar bound on its adjoint, which can be formalized as an obvious generalization of Ref.\ {\cite[Lemma~6.20]{FouRau13}}:

\begin{proposition}[Adjoint RIP]
	\label{prop:adjointRIP}
	Let $\Omega \subset [d]$ be a support, $\A\in \KK^{m\times d}$ be a measurement map and $\ev \in \KK^d$ a vector.
	If $\snorm{\Id - \A_\Omega^\ast \A_\Omega} \leq \delta$ then 
	\begin{equation}
		\|(\A^\ast \ev)_\Omega\| \leq \sqrt{1+\delta} \, \norm{\ev} \, .
	\end{equation}
\end{proposition}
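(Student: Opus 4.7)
The plan is to reduce the claim to a standard operator-norm bound on the restricted measurement matrix $\A_\Omega$. First, I would observe that the restriction of $\A^\ast\ev$ to coordinates in $\Omega$ coincides with $\A_\Omega^\ast\ev$: the $k$-th entry of $\A^\ast\ev$ depends only on the $k$-th column of $\A$, so keeping only the entries indexed by $\Omega$ on the left and taking the adjoint of the corresponding submatrix on the right produce the same vector. Consequently, the quantity to bound equals $\snorm{\A_\Omega^\ast\ev}$, and the trivial estimate $\snorm{\A_\Omega^\ast\ev}\leq\snorm{\A_\Omega^\ast}\,\snorm{\ev}=\snorm{\A_\Omega}\,\snorm{\ev}$ reduces the problem to proving $\snorm{\A_\Omega}\leq\sqrt{1+\delta}$.

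To establish this operator-norm bound, I would invoke the identity $\snorm{\A_\Omega}^2=\snorm{\A_\Omega^\ast\A_\Omega}$ and combine it with the triangle inequality in operator norm:
\begin{equation}
\snorm{\A_\Omega^\ast\A_\Omega}\leq\snorm{\Id}+\snorm{\Id-\A_\Omega^\ast\A_\Omega}\leq 1+\delta,
\end{equation}
where the last inequality is precisely the hypothesis. Taking the square root and chaining with the previous paragraph gives $\snorm{\A_\Omega^\ast\ev}\leq\sqrt{1+\delta}\,\snorm{\ev}$, as claimed. An equivalent route, should one prefer to stay closer to the formulation used elsewhere in the paper, is to invoke Proposition~\ref{prop:characterisationsRIP} to obtain $\snorm{\A\x}^2\leq(1+\delta)\snorm{\x}^2$ for every $\x$ supported in $\Omega$, and then use the variational characterisation $\snorm{\A_\Omega^\ast\ev}=\sup_{\snorm{\x}=1,\,\supp(\x)\subset\Omega}|\braket{\A\x}{\ev}|$ together with Cauchy--Schwarz.

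No real obstacle is anticipated; the statement is a purely linear-algebraic consequence of the operator-norm characterisation of the RIP supplied by Proposition~\ref{prop:characterisationsRIP}. The only minor detail worth making explicit is the commutation between the adjoint and the coordinate restriction exploited in the first step, which is immediate from the definitions of the submatrix $\A_\Omega$ and the restriction $\x_\Omega$ given in Section~\ref{sec:setting}.
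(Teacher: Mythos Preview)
Your proof is correct. Your primary route---identifying $(\A^\ast\ev)_\Omega$ with $\A_\Omega^\ast\ev$ and then bounding via $\snorm{\A_\Omega^\ast}=\snorm{\A_\Omega}\leq\sqrt{1+\delta}$ from the triangle inequality $\snorm{\A_\Omega^\ast\A_\Omega}\leq 1+\snorm{\Id-\A_\Omega^\ast\A_\Omega}$---is slightly more direct than what the paper does. The paper instead expands $\norm{(\A^\ast\ev)_\Omega}^2$ as the inner product $\langle\ev,\A(\A^\ast\ev)\proj_\Omega\rangle$, applies Cauchy--Schwarz, and then uses Proposition~\ref{prop:characterisationsRIP} to bound $\norm{\A(\A^\ast\ev)\proj_\Omega}\leq\sqrt{1+\delta}\,\norm{(\A^\ast\ev)_\Omega}$, cancelling the common factor at the end. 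This is exactly the ``alternative route'' you sketch in your second paragraph, so the two arguments are really the same computation unpacked in different orders; your version has the minor advantage of making the identity $(\A^\ast\ev)_\Omega=\A_\Omega^\ast\ev$ explicit and avoiding the cancellation step.
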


\begin{proof}
	We use that $\norm{\x_\Omega} = \norm{\x\proj_\Omega}$ 
	and the Cauchy-Schwarz inequality to obtain 
	\begin{align}
	\norm{\right(\A^\ast \ev\left)_\Omega}^2 
	&= \braket{\A^\ast \ev}{\left(\A^\ast\ev\right)\proj_\Omega} 
	= \braket{\ev}{\A\left(\A^\ast\ev\right)\proj_\Omega} 
	\\
	&\leq  \norm{\A(\A^\ast\ev)\proj_\Omega} \norm{\ev} \, .\nonumber
	\end{align}
	Applying Proposition~\eqref{prop:characterisationsRIP} yields
	\begin{equation}
		\norm{\right(\A^\ast \ev\left)_\Omega}^2 
		\leq 
		\sqrt{1+\delta} \norm{\right(\A^\ast\ev\left)_\Omega}\norm{\ev}
	\end{equation}
	and cancellation of the factor $\norm{\right(\A^\ast\ev\left)_\Omega}$ completes the proof. 
\end{proof}

Next, we make an observation allowing to restrict the columns of a matrix. 

\begin{proposition}[Restricting columns]\label{prop:twoSupportsNorm}
	Let $\A\in \KK^{m\times d}$ be a matrix, $\x \in \KK^d$ a vector and $\Omega \subset [d]$ an index set. 
	Then
	\begin{equation}
		\normb{\bigl((\Id-\A^\ast\A)\x\bigr)_\Omega} 
		\leq 
		\snormb{\Id - (\A_T)^\ast\A_T} \norm{\x_T},
	\end{equation}
	where $T = (\supp{\x}) \cup \Omega$.
\end{proposition}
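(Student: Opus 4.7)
The plan is to reduce the statement to an equality on the smaller support $T$ and then throw away the additional coordinates.

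First, I would note that by construction $\Omega \subset T$, so the restriction map to $\Omega$ factors as "restrict to $T$, then restrict further to $\Omega$", where the second step has operator norm at most $1$. It therefore suffices to prove the stronger inequality
\begin{equation}
  \normb{\bigl((\Id-\A^\ast\A)\x\bigr)_T} \leq \snormb{\Id - (\A_T)^\ast\A_T}\,\norm{\x_T}.
\end{equation}

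Next, I would use $\supp(\x) \subset T$ to rewrite both sides on the smaller space $\KK^{|T|}$. Since $\x$ vanishes outside $T$, we have $\A\x = \A_T \x_T$, which gives the identity
\begin{equation}
  \bigl(\A^\ast \A \x\bigr)_T = (\A_T)^\ast \A_T \x_T,
\end{equation}
while trivially $\x_T = (\x)_T$. Combining these yields
\begin{equation}
  \bigl((\Id - \A^\ast \A)\x\bigr)_T = \bigl(\Id - (\A_T)^\ast \A_T\bigr)\x_T.
\end{equation}
Taking Euclidean norms and applying the operator-norm bound on the right-hand side finishes the argument, and then the initial restriction-is-a-contraction step produces the claimed bound involving $\Omega$ in place of $T$.

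The main obstacle is essentially just bookkeeping between vectors in $\KK^d$ and their restrictions in $\KK^{|T|}$, $\KK^{|\Omega|}$; all of it is immediate once one observes $\Omega \subset T$. No quantitative estimate needs to be proved beyond the basic fact $\|v_\Omega\| \leq \|v_T\|$ when $\Omega\subset T$, so there is no substantive analytic difficulty.
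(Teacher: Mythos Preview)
Your proof is correct. Both your argument and the paper's hinge on the same observation---everything lives on $T$---but you execute it differently. The paper writes $\norm{\z_\Omega}^2 = \langle \z\proj_\Omega, (\Id-\A^\ast\A)\x\rangle$, expands the inner product, rewrites each term on the index set $T$, applies Cauchy--Schwarz, and then cancels one factor of $\norm{\z_\Omega}$. You instead use the monotonicity $\norm{v_\Omega}\leq\norm{v_T}$ to pass from $\Omega$ to $T$ at the outset, and then establish the exact identity $\bigl((\Id-\A^\ast\A)\x\bigr)_T = (\Id-(\A_T)^\ast\A_T)\x_T$ before taking a single operator-norm bound. Your route is slightly more elementary (no inner-product manipulation, no cancellation step) and makes the equality on $T$ explicit; the paper's inner-product approach is the more traditional compressed-sensing idiom and generalises readily to situations where one wants to bound a bilinear quantity rather than a norm. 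For this particular statement your argument is arguably cleaner.
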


The proof is analogous to the one presented in Ref.\ \cite[Lemma~6.16]{FouRau13}.

\begin{proof}
	With $\z \coloneqq (\Id-\A^\ast\A)\x$, $X\coloneqq \supp(\x)$, and the definition of the projection onto support sets \eqref{eq:def_proj} we obtain
	\begin{align}
	\norm{\bigl(\left(\Id - \A^\ast\A\right)\x\bigr)_\Omega}^2 
	&= \bigl\langle \z\proj_\Omega,  (\Id-\A^\ast\A)\x\bigr\rangle
	\\
	&=  \bigl\langle \z\proj_\Omega, \x \bigr\rangle
	-
	\bigl\langle \A (\z\proj_\Omega) , \A\x \bigr\rangle 
	\\ \nonumber
	&=  \bigl\langle (\z\proj_\Omega)_T, \x_T \bigr\rangle
	-
	\bigl\langle \A_T \z_T , \A_T\x_T \bigr\rangle 
	\\ \nonumber
	&=
	\bigl\langle 
	(\z\proj_\Omega)_T, \bigl(\Id - (\A_T)^\ast\A_T\bigr)\x_T
	\bigr\rangle
	\\
	&\leq
	\norm{\z_\Omega} \snorm{\Id - (\A_T)^\ast\A_T} \norm{\x_T}.
	\end{align}
	Cancelling the factor 
	$\norm{\bigl(\left(\Id - \A^\ast\A\right)\x\bigr)_\Omega}=\norm{\z_\Omega}$
	completes the proof. 
\end{proof}

The proof of Theorem~\ref{thm:recGarant} and Theorem~\ref{thm:generalRecGarant} require a bound for terms involving sums of two and more $(s,\sigma)$-sparse, or general hierarchically $\tree s$-sparse vectors, respectively. To this end we will use the Observation~\ref{observation:supportUnions} from Section~\ref{sec:analytic} for the case of $(s,\sigma)$-sparse vectors and its generalization to general hierarchically $\tree s$-sparse vectors introduced in Section~\ref{sec:generalHierachicalSparsity}. Using the notation introduced in Section~\ref{sec:generalHierachicalSparsity} and following the logic of Section~\ref{sec:analytic} yields the more general observation:
\begin{observation}[Support unions]\label{observation:supportUnions:general}
	For $i=1,2$ let $\Omega_i \subset [d]$ be an $\tree s$-sparse support and $\A\in \KK^{m\times d}$
	with RIP constants $\delta_{\tree s}$. Then
	\begin{equation}\label{eq:supportUnionsAgain}
		\norm{\Id- \A^\ast_{\Omega_1 \cup \Omega_2}\A_{\Omega_1 \cup  \Omega_2}}
		\leq 
		\delta_{2\tree s}.
	\end{equation}
\end{observation}

We are now in the position to prove that {\HiHTP} converges to the ``right'' solution for sufficiently many measurements.

\begin{proof}[Proof of the recovery guarantees, Theorem~\ref{thm:recGarant} and Theorem~\ref{thm:generalRecGarant}]
	We modify the argument as presented for {\htp} in Ref.\ \cite[Proof of Theorem~6.18]{FouRau13}. 
	Similar versions can be found in Refs.\ \cite{Foucart:2011} and \cite{BouchotEtAl:2016}. 
	Most steps of the proofs of Theorem~\ref{thm:recGarant} and Theorem~\ref{thm:generalRecGarant} are identical and are, thus, given here together. 
	We will later explicitely give both statements of the required RIP-type bounds for $(s,\sigma)$-sparse and $\tree s$-sparse vectors, respectively.
	The proofs rely on two observations, that follow directly from the definition of the algorithm: 
	\paragraph{Consequence of Algorithm~\ref{alg:HiHTP}~Line~\ref{alg:HiHTP:TH}}
	By definition, the thresholding operator $L_{\tree{s}}(\z)$  yields the $\vec{s}$-sparse support such that $\norm{\z_{L_{\tree{s}}}(\z)} \geq \norm{\z_\Sigma }$ for all $\tree{s}$-sparse $\Sigma$. 
	This of course also holds for the special case of $L_{s,\sigma}$.
	In particular, Line~\ref{alg:HiHTP:TH} of Algorithm~\ref{alg:HiHTP} (and likewise its generalization) ensures that in the $(k+1)$\textsuperscript{th} iteration
	\begin{equation}
		\norm{\left(\x^k + \A^\ast (\y - \A \x^k)\right)_{\Omega^{k+1}}} \geq \norm{\left(\x^k + \A^\ast (\y - \A \x^k)\right)_\Omega} .
	\end{equation}
	This bound still holds after eliminating the contribution of entries with indices in $\Omega\cap \Omega^{k+1}$. 
	Hence, 
	\begin{equation}
		\begin{split}
			\bigl\|(\x^k &+ \A^\ast (\y - \A \x^k))_{\Omega^{k+1}\setminus \Omega}\bigr\| \\
			&\geq \norm{\left(\x^k + \A^\ast (\y - \A \x^k)\right)_{\Omega \setminus \Omega^{k+1}}} \\ 
			&= \norm{\left(\x_\Omega - \x^{k+1}  + \x^k -\x_\Omega + \A^\ast (\y - \A \x^k)\right)_{\Omega \setminus \Omega^{k+1}}} \\
			&\geq \norm{\left(\x_\Omega - \x^{k+1}\right)_{\Omega\setminus \Omega^{k+1}}} \\
			&\quad\ - \norm{\left( \x^k -\x_\Omega + \A^\ast (\y - \A \x^k)\right)_{\Omega \setminus \Omega^{k+1}}}.
		\end{split}
	\end{equation}
	This inequality can be recast as
	\begin{equation}\label{eq:recGarant:THobservation}
		\begin{split}
			&\phantom{={} } \norm{\left(\x_\Omega - \x^{k+1}\right)_{\Omega\setminus \Omega^{k+1}}}
			\\
			&\leq \norm{\left(\x^k - \x_\Omega + \A^\ast(\y - \A\x^k) \right)_{\Omega^{k+1}\setminus \Omega}} \\
			&\qquad + \norm{\left(\x^k - \x_\Omega + \A^\ast(\y-\A\x^k)\right)_{\Omega\setminus \Omega^{k+1}}} \\
			&\leq \sqrt{2}\, \normB{\left(\x^k - \x_\Omega + \A^\ast(\y - \A\x^k)\right)_{\Omega\Delta\Omega^{k+1}}},
		\end{split}
	\end{equation}
	where $\Omega\Delta\Omega^{k+1} \coloneqq (\Omega \setminus \Omega^{k+1}) \cup (\Omega^{k+1} \setminus \Omega)$ denotes the symmetric difference of $\Omega$ and $\Omega^{k+1}$. 
	
	\paragraph{Consequence of Algorithm~\ref{alg:HiHTP}~Line~\ref{alg:HiHTP:LS}}
	The second observation is that, by definition, $\x^{k+1}$ calculated in 
	Line~\ref{alg:HiHTP:LS} of Algorithm~\ref{alg:HiHTP} fulfills the minimality condition
	\begin{equation}\label{eq:recGarant:LSobservation}
		\left(\A^\ast(\y-\A\x^{k+1})\right)_{\Omega^{k+1}} = 0 ,
	\end{equation}
	which can be seen by setting the gradient of 
	$\x \mapsto \norm{\A \x - \y}^2$ to zero. 
	
	With the bound \eqref{eq:recGarant:THobservation} and the minimality condition \eqref{eq:recGarant:LSobservation} we are now in a 
	position to find the bound
	\begin{equation}
		\begin{split}
			\bigl\Vert\x_\Omega &- \x^{k+1}\bigr\Vert^2
			\\ 
			&= \norm{\left(\x^{k+1} - \x_\Omega\right)_{\Omega^{k+1}}}^2
			\\ 
			&\qquad + \norm{(\x^{k+1}-\x_\Omega)_{\Omega\setminus \Omega^{k+1}}}^2 \\
			&\leq \norm{\left(\x^{k+1} - \x_\Omega + \A^\ast (\y - \A \x^{k+1})\right)_{\Omega^{k+1}}}^2 \\
			&\qquad + 2 \norm{\left(\x^{k} - \x_\Omega +  \A^\ast (\y - \A \x^{k})\right)_{\Omega\Delta\Omega^{k+1}}}^2.
		\end{split}
	\end{equation}
	Inserting $\y = \A \x_\Omega + \ev$ yields
	\begin{equation}\label{eq:recGarant:boundWithFourTerms}
		\begin{split} 
			\bigl\Vert\x^{k+1}&-\x_\Omega\bigr\Vert^2 
			\\
			&\leq 
			\Big[\normb{\left((\Id-\A^\ast \A)(\x^{k+1}-\x_\Omega)\right)_{\Omega^{k+1}}} \\
			&\qquad + \normb{(\A^\ast\ev)_{\Omega^{k+1}}}\Big]^2 \\
			&\qquad+ 2 \Big[\normb{\left((\Id - \A^\ast\A)(\x^k - \x_\Omega)\right)_{\Omega\Delta\Omega^{k+1}}} \\
			&\qquad\qquad+\normb{(\A^\ast\ev)_{\Omega\Delta\Omega^{k+1}}}\Big]^2 .
		\end{split}
	\end{equation}
	\newcommand{\abound}{\ensuremath c_1}
	\newcommand{\bbound}{\ensuremath c_3} 
	\newcommand{\cbound}{\ensuremath c_2}
	\newcommand{\dbound}{\ensuremath c_4}
	Using Proposition~\ref{prop:twoSupportsNorm} and Observation~\ref{observation:supportUnions} 
	we find for the first and third term that
	\begin{equation}\label{eq:recGarant:firstTermBound}
		\begin{split} 
			\bigl\|\big((\Id-&\A^\ast \A)(\x^{k+1}-\x_\Omega)\big)_{\Omega^{k+1}}\bigr\| \\
			&\leq \snormb{\Id-\A_{\Omega \cup \Omega^{k+1}}^\ast\A_{\Omega \cup \Omega^{k+1}}}\norm{\x^{k+1}-\x_\Omega} \\
			&\leq \abound \normb{\x^{k+1}-\x_\Omega} 
		\end{split}
	\end{equation}
	and
	\begin{equation}\label{eq:recGarant:thirdTermBound}
		\begin{split} 
			\bigl\|\big((&\Id - \A^\ast\A)(\x^k - \x_\Omega)\big)_{\Omega\Delta\Omega^{k+1}}\bigr\| \\ 
			&\leq \snormb{\Id-\A_{\Omega \cup (\Omega^{k+1}\setminus \Omega) \cup \Omega^k}^\ast
				\A_{\Omega \cup (\Omega^{k+1}\setminus \Omega) \cup \Omega^k}} 
			\normb{\x^k - \x_\Omega} \\
			&\leq \cbound \normb{\x^k-\x^\Omega},  
		\end{split}
	\end{equation}
	where $\abound$ and $\cbound$ are suitable RIP constants. 
	For the case of $(s,\sigma)$-sparse vectors Observation~\ref{observation:supportUnions} ensures that we can choose $\abound \coloneqq \delta_{2s,2\sigma}$. 
	Furthermore, 
	we notice that $\Omega \cup (\Omega^{k+1}\setminus \Omega)$ is $(2s,\sigma)$-sparse and correspondingly define $\cbound \coloneqq \delta_{3s,2\sigma}$. 
	In order to prove the more general version for hierarchical $\tree s$-sparse vectors with more than two hierarchy levels, the analogous argument and Observation~\ref{observation:supportUnions:general} allows us to choose $\abound \coloneqq \delta_{2\tree s}$ and $\cbound \coloneqq \delta_{3 \tree s}$.
	For the remaining two terms Proposition~\ref{prop:adjointRIP} and 
	Observation~\ref{observation:supportUnions} yield
	\begin{equation}
		\norm{(\A^\ast\ev)_{\Omega^{k+1}}} \leq \bbound \norm{\ev} \label{eq:recGarant:secondTermBound}
	\end{equation}
	with $\bbound \coloneqq  ({1+\delta_{s,\sigma}})^{1/2} $ 
	or $\bbound \coloneqq ({1+\delta_{\tree s}})^{1/2}$, respectively, 
	and 
	\begin{equation}
		\norm{(\A^\ast\ev)_{\Omega\Delta\Omega^{k+1}}} \leq \sqrt{1 + \abound} \norm{\ev}. \label{eq:recGarant:fourthTermBound}
	\end{equation}

	Plugging \eqref{eq:recGarant:firstTermBound}, \eqref{eq:recGarant:secondTermBound}, \eqref{eq:recGarant:thirdTermBound}, \eqref{eq:recGarant:fourthTermBound} back into \eqref{eq:recGarant:boundWithFourTerms}, solving the quadratic inequality for $\norm{\x^{k+1} + \x_\Omega}$ and then using that $\sqrt{a^2+b^2} \leq |a| + |b|$ for $a,b \in \RR$ leads to 
	\begin{equation}
		\begin{split}
			\normb{\x^{k+1} + \x_\Omega} &\leq  \Bigg[ \frac{2}{1-\abound^2} \left(\cbound \normb{\x^k - \x_\Omega} + \sqrt{1+\abound} \norm{\ev} \right)^2 \\
			&\quad + \left(\frac{\bbound}{1-\abound^2}\right)^2 \norm{\ev}^2\Bigg]^{1/2} + \frac{\abound\bbound}{1- \abound^2} \norm{\ev} 
			\\
			&\leq \sqrt{\frac{2 \cbound^2}{1-\abound^2}}  \norm{\x^k - \x_\Omega} \\ 
			&\quad +\frac{\sqrt{2 (1-\abound)} + \bbound}{1-\abound} \norm{\ev}.
		\end{split}
	\end{equation}
	
	We define $\rho \coloneqq \sqrt{2 \cbound^2/(1-\abound^2)}$ and observe that 
	$\rho \leq \sqrt{2 \cbound^2/(1-\cbound^2)} < 1 $
	if 
	$\delta_{3s,2\sigma} < 1/\sqrt{3}$ or
	$\delta_{3\tree s} < 1/\sqrt{3}$, respectively
	.
	Furthermore, we define the parameter $\tau$ such that $(1-\rho)\tau = \frac{\sqrt{2 (1-\abound)} + \bbound}{1-\abound} $
	holds. 
	In the regime $0 \leq c_1 \leq 1/\sqrt{3}$ we can make use of the linear bound 
	\begin{equation}\label{eq:linbound}
		\frac{\sqrt{2 (1-\abound)} + \bbound}{1-\abound}  \leq \sqrt{2}+\bbound + \lambda\abound
	\end{equation}
	with $\lambda = (4.733 \bbound + 2.637)/2$. 
	Plugging  
		$
	\bbound \leq \sqrt{1+\cbound} < \sqrt{1+1/\sqrt{3}} $
	and $c_1 \leq c_2 < 1/\sqrt{3}$ into the linear bound \eqref{eq:linbound} yields
	$
		(1-\rho)\tau \leq 5.15,
	$
	which completes the proof. 
\end{proof}

\begin{figure}[tb]
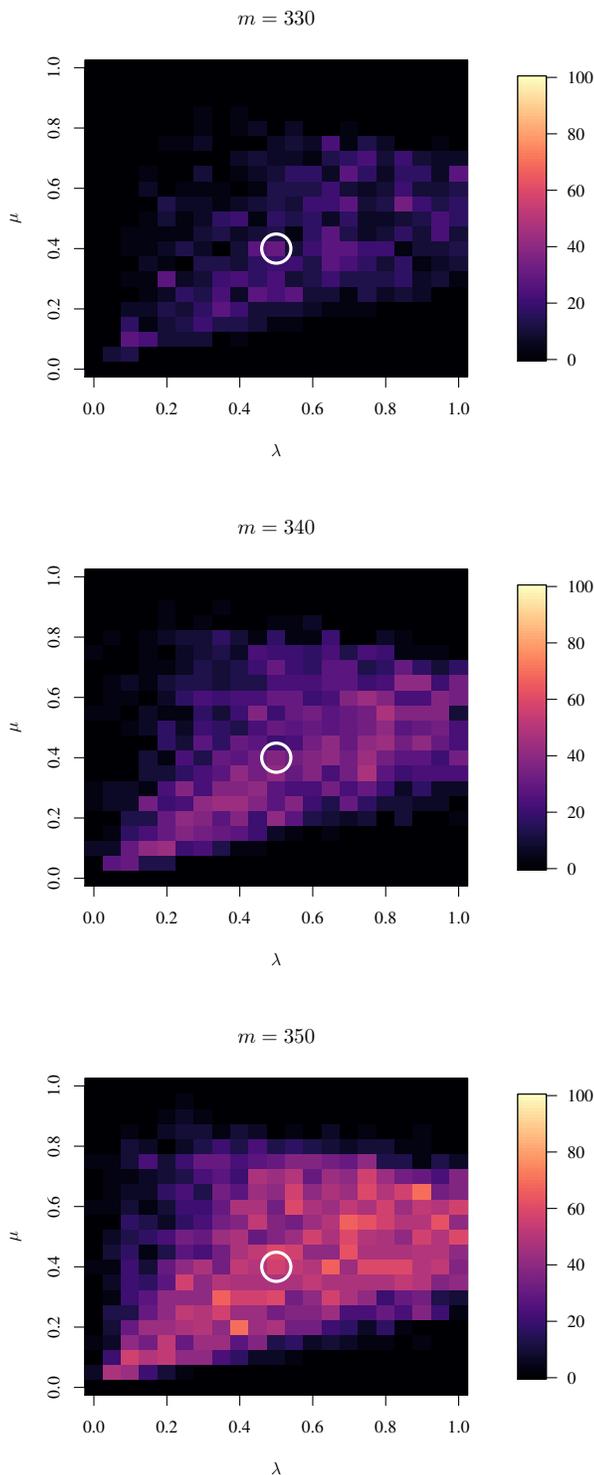

	\centering
	\include{plotCalibration}
	\caption{The figures show the percentage of recovered signals for {\HiLasso} using different calibration parameters $\mu$ and $\lambda$ for numbers of measurements $m=320$, $m=340$ and $m=350$, respectively. The value of $\mu=0.4$ and $\lambda=0.5$ used for Fig.~\ref{fig:samplingComplexityGauss} are highlighted by a white circle. The dimension of the signals are $N=30$, $n=100$, $s=4$ and $\sigma = 20$ }
	\label{fig:HiLassoCalibration}
\end{figure}

\section{Calibration of {\HiLasso} }\label{app:calibration}

In the optimization problem of the {\HiLasso} algorithm 
\begin{equation}
	\operatorname{minimize} \frac{1}{2} \norm{\y-\A\x}^2 + \mu \|\x\|_{\ell_1} + \lambda \| \x \|_{\ell_2/\ell_1}.
\end{equation}
the parameters $\mu,\lambda>0$ have to be chosen according to the sparsity structure of the signal and the noise level of the measurements. 
In our tests the signals are $(s=4, \sigma=20)$-sparse with $N=30$ blocks each of dimension $n=100$. The support is drawn uniformly at random and the entries are i.i.d.\ real numbers drawn from a standard normal distribution. Fig.~\ref{fig:HiLassoCalibration} shows the percentage of recovered signals for different values of the parameters $\mu$ and $\lambda$ of the {\HiLasso} algorithm and different numbers $m$ of noiseless Gaussian measurements. For each combination of $\mu$ and $\lambda$ the algorithm is tested for $30$ signals.  
 A signal is recovered if it deviates by less than $10^{-5}$ in Euclidean norm from the original signal.  
We observe for a number of measurements $m$ between $300$--$350$ that choosing non-zero values for both parameters yields recoveries while setting one parameter to zero, corresponding to standard $\ell_1$-regularization and mixed $\ell_1/\ell_2$-norm regularization, does not.  

For the numerics of Fig.~\ref{fig:samplingComplexityGauss},  we have chosen the parameter $\mu=0.5$ and $\lambda=0.4$. This calibration point lies approximately in the centre of the parameters for which recoveries are observed and is among the maximal points within the statistical error.  In fact, using a couple of randomly selected further calibration points which appear reasonable from Fig.~\ref{fig:HiLassoCalibration} yield the same results for Fig.~\ref{fig:samplingComplexityGauss}.

\section*{Acknowledgements}
We thank A.~Steffens, C.~Riofr\'io, D.~Hangleiter and C.~Krumnow for helpful discussions. We are also grateful to S.~Kiti\'c, P.~Schniter and P.~Boufounos for pointing to related references. 
The research of IR, MK, JE has been supported by 
the DFG project EI 519/9-1 (SPP1798 CoSIP, MATH+),
the Templeton Foundation, the EU (RAQUEL), 
the BMBF (Q.com), and 
the ERC (TAQ)., Q.Link-X).
The work of MK has also been funded by the
	Excellence Initiative of the German Federal and State Governments (Grant ZUK 81), 
	the DFG (SPP1798 CoSIP), and 
 the National Science Centre, Poland within the project Polonez (2015/19/P/ST2/03001) 
 which has received funding from the European Union's Horizon 2020 research and innovation programme under the 
 Marie Sk{\l}odowska-Curie grant agreement No. 665778. 
The work of GW was carried out within DFG grants WU 598/7-1 and WU 598/8-1 (SPP1798 CoSIP), 
and the 5GNOW project, supported by
the European Commission within FP7 under grant 318555.
{AF acknowledges support by the (DFG) Grant KU 1446/18-1 and by ANR
JCJC OMS.}

\bibliographystyle{IEEEtran}

\end{document}